\crefname{LP}{LP}{LPs}
\crefname{ineq}{inequality}{inequalities}
\newcommand\numberthis{\addtocounter{equation}{1}\tag{\theequation}}
\definecolor{crimsonglory}{rgb}{0,0,0}
 \newtheorem{theorem}{Theorem}[section]
 \newtheorem{lemma}[theorem]{Lemma}
 \newtheorem{definition}[theorem]{Definition}
\def\GrabProofArgument[#1]{ #1: \egroup\ignorespaces}
\def\proof{\noindent\textbf\bgroup Proof%
	\@ifnextchar[{\GrabProofArgument}{. \egroup\ignorespaces}}
\newcounter{proccnt}
\newcommand{\konote}[1]{}
\crefname{LP}{LP}{LPs}
\crefname{ineq}{inequality}{inequalities}
\title{Beating the Logarithmic Barrier for the Subadditive Maximin Share Problem}
\author{
    Masoud Seddighin\thanks{Tehran Institute for Advanced Studies} 
	\and Saeed Seddighin
}
\newcommand{\MMS}{\textsf{MMS}}
\newcommand{\valu}{V}
\newcommand{\items}{M}
\newcommand{\agents}{N}
\newcommand{\alloc}{A}
\newcommand{\agent}{a}
\newcommand{\matching}{\mathscr{M}}
\newcommand{\easily}{\mathcal E}
\newcommand{\noteasily}{\mathcal H}
\newcommand{\block}{B}
\newcommand{\subsett}{X}
\newcommand{\ite}{b}
\newcommand{\tr}[1]{{\hat{#1}}}
\begin{document}
	\newcommand{\ignore}[1]{}
\renewcommand{\theenumi}{(\roman{enumi}).}
\renewcommand{\labelenumi}{\theenumi}
\sloppy
\date{}

\maketitle

\thispagestyle{empty}

\begin{abstract}
We study the problem of fair allocation of indivisible goods for subadditive agents. 
While constant-\textsf{MMS} bounds have been given for additive and fractionally subadditive agents, the best existential bound for the case of subadditive agents is $1/O(\log n \log \log n)$. In this work, we improve this bound to a $1/O((\log \log n)^2)$-\textsf{MMS} guarantee. To this end, we introduce new matching techniques and rounding methods for subadditive valuations that we believe are of independent interest and will find their applications in future work.
\end{abstract}
\section{Introduction}\label{introduction}

Fair division is a classic problem in mathematics and economics, with applications that extend to fields such as political science, social science, and computer science \cite{Dubins:first,Steinhaus:first,brams1996fair}. The problem was first formally introduced by Hugo Steinhaus under the title of "cake cutting," which aims to divide a divisible resource -- known as the "cake" -- among a group of agents with differing preferences. Over the past eighty years, the cake-cutting problem has been a major focus of research \cite{even1984note,stromquist1980cut,brams1996fair}. More recently, a discrete version of this problem, which involves the allocation of indivisible items, has gained attention, particularly in the field of computer science \cite{Procaccia:first,ghodsi2018fair,caragiannis2016unreasonable,amanatidis2015approximation,kaviani2024almost,Budish:first,Saberi:first,robertson1998cake,seddighin2019externalities}.

A key challenge of fair division is defining fairness in a way that is both meaningful and achievable. A fairness notion must strike a balance between feasibility and what participants perceive as fair. In cake-cutting, notions such as proportionality and envy-freeness capture this balance well --- they align with intuitive fairness perceptions and have been widely accepted and extensively studied. Proportionality ensures that each participant receives at least their $1/n$ of the total value in an $n$-person fair division, while envy-freeness guarantees that no one prefers another participant's allocation over their own. Many results show that these notions can be guaranteed, often alongside other desirable properties such as connectivity of shares and Pareto optimality \cite{Dubins:first,stromquist1980cut,even1984note}.

Moving beyond cake-cutting, these fairness notions become impractical when dealing with indivisible items. Consider a simple example: if there is only one indivisible item and two agents, neither envy-freeness nor proportionality -- nor even any approximation of these notions -- can be guaranteed. This shortcoming has led to the development of more flexible fairness criteria. 

Over the past decade, several such notions have emerged, including envy-freeness up to one item (\textsf{EF1}), envy-freeness up to any item (\textsf{EFX}), envy-freeness up to a random item (\textsf{EFR}), proportionality up to one item (\textsf{Prop1}), maximin share (\textsf{MMS}), pairwise maximin share (\textsf{PMMS}) \cite{Budish:first,caragiannis2016unreasonable,conitzer2017fair,farhadi2017fair}. Among them, \textsf{EFX} is widely regarded as the most prominent alternative to envy-freeness, while \textsf{MMS} is the most well-studied relaxation of proportionality.

In this paper, we focus on the maximin-share  criterion. This notion was first introduced by \textcite{Budish:first} as a relaxation of proportionality for indivisible items. Roughly speaking, the \textsf{MMS} value of an agent represents the best guarantee they can secure if they were to divide the items into \( n \) bundles and receive the least valued bundle under their own partitioning.

Formally, for a set of items $\items$ and an agent $\agent_i$ among $n$ agents, the maximin share value of agent $\agent_i$ for items in $\items$, denoted by $\MMS_i(\items)$, is defined as
$
\MMS_i(\items) = \max_{\langle \pi_1, \pi_2, \ldots, \pi_n \rangle \in \Pi} \min_{1 \leq j \leq n} \valu_i(\pi_j),
$
where $\Pi$ represents the set of all possible partitionings of $\items$ into $n$ bundles, and $\valu_i(\pi_j)$ is the value that agent $\agent_i$ assigns to bundle $\pi_j$. An allocation is called \textsf{MMS} (or $\beta$-\textsf{MMS}) if every agent $\agent_i$ receives a bundle worth at least $\MMS_i(\items)$ (or $\beta \cdot \MMS_i(\items)$) to her.

Maximin share (\(\MMS\)) was first introduced to computer science by \textcite{Procaccia:first}. In their work, they present an elegant counterexample showing that \(\MMS\) allocations do not always exist. On the positive side, they prove that for additive valuations, a \(\nicefrac{2}{3}\)-\(\MMS\) allocation can always be guaranteed \cite{Procaccia:first}.  
This bound is later improved to \(\nicefrac{3}{4}\)-\(\MMS\) by \textcite{ghodsi2018fair}. Subsequent refinements, including \(\left(\nicefrac{3}{4} + \nicefrac{1}{12n}\right)\)-\(\MMS\) by \textcite{garg2020improved} and \(\left(\nicefrac{3}{4} + \min\left(\nicefrac{1}{36}, \nicefrac{3}{16n-4}\right)\right)\)-\(\MMS\) by \textcite{akrami2023simplification}, provide slight improvements but converge to \(\nicefrac{3}{4}\)-\(\MMS\) for large \( n \). Recently, \textcite{akrami2024breaking} break this barrier by proving a guaranteed factor of \(\left(\nicefrac{3}{4} + \nicefrac{3}{3836}\right)\)-\(\MMS\), marking the first improvement beyond \(\nicefrac{3}{4}\)-\(\MMS\) independent of \( n \).

In addition to the additive setting, it has be shown that under fractionally subadditive setting also a constant approximation guarantee is possible~\cite{maseed123,ghodsi2018fair,akrami2024randomized}. However, this is in contrast to the subadditive setting for which the best bound prior to our work is $\nicefrac{1}{O(\log n \log \log n)}$-\textsf{MMS} guarantee~\cite{maseed123}. In this work we improve this bound to $\nicefrac{1}{O((\log \log n)^2)}$-\textsf{MMS} guarantee. Similar to previous work, we first show the existence of a multiallocation (an allocation in which an item may be given to multiple agents) with provable guarantees and then show that our multiallocation can be turned into a desired allocation. Conventional methods would lose an $O(\log n)$ in each of these steps however, we manage to introduce new techniques that only lose $O(\log \log n)$ in each step.
In particular, parts of our analysis extend and generalize the rounding method of~\textcite{feige2009maximizing} which we believe is of independent interest and will find its applications in future work.


\section{Preliminaries}\label{prelim}

We define the set of agents as \(\agents = \{\agent_1, \agent_2, \ldots, \agent_n\}\) and the set of items as \(\items = \{\ite_1, \ite_2, \ldots, \ite_m\}\). Each agent \(\agent_i\) assigns a valuation \(\valu_i(S)\) to any set \(S \subseteq \items\). We assume that valuations are \textbf{non-negative}, meaning \(\valu_i(S) \geq 0\) for all agents \(\agent_i\) and sets \(S\), and \textbf{monotone}, meaning that for any two sets \(S_1, S_2 \subseteq \items\) and for all 
$\agent_i \in \agents$, $\valu_i(S_1 \cup S_2) \geq \max\{\valu_i(S_1), \valu_i(S_2)\}$ holds. also, we assume that the valuation functions are \textbf{subadditive}, so for every agent \(\agent_i\) and any two sets of items \(S_1, S_2\), we have  
$
V(S_1) + V(S_2) \geq V(S_1 \cup S_2).
$ 
Let $\Pi_r$ be the set of all partitionings of $\items$ into $r$ disjoint subsets. For a  set function $f(\cdot)$, we define $\MMS_f^r(\items)$ as 
$ \MMS_f^r(\items) = \max_{\langle \pi_i,\pi_2,\ldots,\pi_r \rangle \in {\Pi_r}}  \min_{1 \leq j \leq r} f(\pi_j).$

A \textbf{multiallocation} of items to agents is an $n$-tuple $\mathcal{A} = \langle A_1, A_2, \ldots, A_n \rangle$, where $A_i \subseteq \items$ is the bundle of items allocated to agent $\agent_i$. Note that the bundles $A_i$ do not need to be disjoint. We say that $\mathcal{A}$ is an \textbf{$\alpha$-multiallocation} if each item appears in at most $\alpha$ different bundles. When $\alpha = 1$, we refer to a $1$-multiallocation simply as an \textbf{allocation}. In an allocation, the bundles allocated to any two agents $\agent_i$ and $\agent_j$ are disjoint.

An allocation (or multiallocation) $\mathcal{A}$ is called \textbf{$\alpha$-$\MMS$} if every agent $\agent_i$ receives a subset of items whose value to her is at least $\alpha$ times their maximin-share ($\MMS_i$). Formally, $\mathcal{A}$ is $\alpha$-$\MMS$ if and only if
$
\valu_i(A_i) \geq \alpha \cdot \MMS_i
$
holds for every agent $\agent_i \in \agents$.

Finally, since maximin-share is a scale-free criterion, we assume for simplicity throughout this paper that $\MMS_{\valu_i}^n(\items)=1$ for every agent $\agent_i \in \agents$. 

\section{Our Results}
We prove that under subadditive valuation functions, there always exists an allocation that is $\nicefrac{1}{O((\log \log n)^2)}$-\MMS. As mentioned earlier, our proof is consisted of two parts. We first show a reduction that shows the existence of a multiallocation with bounded \MMS\ guarantees would result in an allocation with similar guarantees. We then proceed by presenting algorithms that are guaranteed to produce desirable multiallocations with non-zero probability. We outline these steps seperately in \Cref{sec:o1,sec:o2}. 
\subsection{Reduction from Allocations to Multiallocations}\label{sec:o1}

Let us begin by stating a concentration bound for which we bring a proof in \Cref{sec:first}.

\vspace{0.2cm}
{\noindent \textbf{Lemma} \ref{lemma:1} [restated informally]. \textit{Let for an $\hat{n} \geq 1$ and a ground set of elements $\hat{M}$, $f:2^{\hat{M}} \rightarrow \mathbb{R}_+$ be a monotone subadditive function with non-negative values such that $f(S) \leq f(\hat{M})/2$ holds for any subset $S \subseteq \hat{M}$ such that $|S| \leq O(\nicefrac{\log \hat{n}}{p})$. Let $R$ be a random subset of $\hat{M}$ such that each element of $\hat{M}$ appears in $R$ independently with probability $0 \leq p \leq 1$. Then for some constant $c = \Omega(1)$ we have:
		$\mathsf{Pr} \left[ f(R) \geq c f(\hat{M})p \right] > 1-1/\hat{n}.$\\}}

Lemma~\ref{lemma:second} presents a convenient tool to turn a multiallocation into a valid allocation by losing an additional $O(\log \log n)$ multiplicative factor in the approximation. 

\vspace{0.2cm}
{\noindent \textbf{Lemma} \ref{lemma:second} [restated informally]. \textit{Let $\hat{\valu}_1, \hat{\valu}_2, \ldots, \hat{\valu}_{\hat{n}}$ be $\hat{n}$ monotone subadditive functions with non-negative valuations defined on a ground set of elements $\hat{M} = \{\hat{\ite}_1, \hat{\ite}_2, \hat{\ite}_3, \ldots, \hat{\ite}_{|\hat{M}|}\}$. Let $A_1, A_2, \ldots, A_{\hat{n}} \subseteq \hat{M}$ be $\hat{n}$ subsets of $\hat{M}$ such that no element of $\hat{M}$ appears in more than $\alpha$ of these subsets. If $\hat{\valu}_{i}({\{\hat{\ite}_x\}}) \leq \beta$ for every $\hat{\ite}_x \in \hat{M}$ and $1 \leq i \leq \hat{n}$ then there exist $\hat{n}$ disjoint subsets $A'_1, A'_2, \ldots, A'_{\hat{n}} \subseteq \hat{M}$ such that for some $c_1 = \Omega(1)$ and  $c_2 = O(1)$ we have:
		\begin{equation}\label{eq:lem2} \hat{\valu}_i(A'_i) \geq \frac{c_1 \hat{\valu}_i(A_i)}{\alpha (\log \log \hat{n} + \log \alpha)} - c_2 \beta.\end{equation}\\}}
To better understand the connection between Lemma~\ref{lemma:second} and the allocation problem, consider $n$ agents $\agent_1, \agent_2, \ldots, \agent_n$ whose valuations functions for items of $M$ are $\valu_1, \valu_2, \ldots, \valu_n$ and their allocated items in a multiallocation are $\mathcal{A}= A_1, \ldots, A_n$. \Cref{lemma:second} proves that it is possible to obtain an allocation $\mathcal{A}' = A'_1, A'_2, \ldots, A'_n$ of items to the agents such that the utility of each agent is lower bounded by a fraction of her utility in the multiallocation. In what follows, we outline the ideas of \Cref{lemma:second}.

Remark that Lemma~\ref{lemma:second} goes beyond our allocation context and can be used in a general form. Therefore, we use $\hat{n}$, $\hat{\valu}$, and  $\hat{M}$ to distinguish between the inputs of Lemma~\ref{lemma:second} and the variables of the \MMS\ problem. However, we explain the ideas in terms of the allocation problem to simplify the understanding. That is, we assume $\hat{M} = \{\hat{\ite}_1, \hat{\ite}_2, \hat{\ite}_3, \ldots, \hat{\ite}_{|\hat{M}|}\}$ are items and there are $\hat{n}$ agents $\hat{\agent}_1,\hat{\agent}_2,\ldots,\hat{\agent}_{\hat{n}}$ such that each $\hat{\valu_{i}}$ represents the valuation of agent $\hat{\agent}_i$ for the items of $\hat{M}$. At a high-level, we divide the agents into two categories: 
\begin{itemize}
	\item Category (i) \textbf{easily-satisfiable agents}: An agent $\hat{\agent}_i$ is \textit{easily-satisfiable} if and only if there exists a subset $X_i \subseteq A_i$ such that $\hat{\valu}_i(X_i) \geq \hat{\valu}_i(A_i)/2$ and $|X_i| \leq 80 \alpha (\log \hat{n}+1)$. 
	\item Category (ii) \textbf{not-easily-satisfiable agents}: An agent who is not-easily-satisfiable.
\end{itemize}

To construct the final allocation we first make an intermediary multiallocation 
in which every item is allocated to at most two agents. More precisely, in this multiallocation, in each category of agents, the allocated items are disjoint. However, an item may be allocated to both an easily-satisfiable agent and a not-easily-satisfiable agent. For an easily-satisfiable agent $\hat{\agent}_i$, we show that we can approximate her valuation function for $X_i$ by a linear function $\widetilde{\valu}_i$ that loses only a factor of $O(\log \alpha + \log \log \hat{n})$ in comparison to $\hat{\valu}_i(X_i)$. In other words, we have $\widetilde{\valu}_i(X_i) \geq \hat{\valu}_i(X_i) / (\log \alpha + \log \log \hat{n})/c$ for some constant $c$ and for each subset $Y \subseteq X_i$ we have $\widetilde{\valu}_i(Y) \leq \hat{\valu}_i(Y)$. We call $\widetilde{\valu}_i$ the auxiliary valuation function for agent $\hat{\agent}_i$. We then leverage the linearity of the auxiliary valuation functions and via a matching-based algorithm construct an allocation with the desired properties. 

\begin{figure}[t]
	
\tikzset{every picture/.style={line width=0.75pt}} 
\centering
\scalebox{1}{
\begin{tikzpicture}[x=0.75pt,y=0.75pt,yscale=-0.65,xscale=0.65]

\draw  [fill={rgb, 255:red, 126; green, 211; blue, 33 }  ,fill opacity=0.4 ] (259,346) -- (303,346) -- (303,386) -- (259,386) -- cycle ;
\draw  [fill={rgb, 255:red, 189; green, 16; blue, 224 }  ,fill opacity=0.4 ] (319,346) -- (363,346) -- (363,386) -- (319,386) -- cycle ;
\draw  [fill={rgb, 255:red, 74; green, 144; blue, 226 }  ,fill opacity=0.4 ] (377,346) -- (421,346) -- (421,386) -- (377,386) -- cycle ;
\draw  [fill={rgb, 255:red, 248; green, 231; blue, 28 }  ,fill opacity=0.4 ] (457,346) -- (501,346) -- (501,386) -- (457,386) -- cycle ;
\draw  [fill={rgb, 255:red, 139; green, 87; blue, 42 }  ,fill opacity=0.4 ] (517,346) -- (561,346) -- (561,386) -- (517,386) -- cycle ;
\draw  [fill={rgb, 255:red, 208; green, 2; blue, 27 }  ,fill opacity=0.4 ] (575,346) -- (619,346) -- (619,386) -- (575,386) -- cycle ;
\draw   (17,332) -- (642,332) -- (642,397) -- (17,397) -- cycle ;
\draw  [fill={rgb, 255:red, 0; green, 0; blue, 0 }  ,fill opacity=0.4 ] (30,346) -- (74,346) -- (74,386) -- (30,386) -- cycle ;
\draw  [fill={rgb, 255:red, 155; green, 155; blue, 155 }  ,fill opacity=0.4 ] (90,346) -- (134,346) -- (134,386) -- (90,386) -- cycle ;
\draw  [fill={rgb, 255:red, 245; green, 166; blue, 35 }  ,fill opacity=0.4 ] (148,346) -- (192,346) -- (192,386) -- (148,386) -- cycle ;
\draw  [dash pattern={on 4.5pt off 4.5pt}] (24,311) -- (199,311) -- (199,393) -- (24,393) -- cycle ;
\draw  [dash pattern={on 4.5pt off 4.5pt}] (251,311) -- (426,311) -- (426,393) -- (251,393) -- cycle ;
\draw  [dash pattern={on 4.5pt off 4.5pt}] (444.71,312.5) -- (630.71,311.49) -- (631.15,392.5) -- (445.15,393.5) -- cycle ;
\draw  [fill={rgb, 255:red, 126; green, 211; blue, 33 }  ,fill opacity=0.4 ] (497,107) -- (541,107) -- (541,147) -- (497,147) -- cycle ;
\draw  [fill={rgb, 255:red, 189; green, 16; blue, 224 }  ,fill opacity=0.4 ] (381,108) -- (425,108) -- (425,148) -- (381,148) -- cycle ;
\draw  [fill={rgb, 255:red, 74; green, 144; blue, 226 }  ,fill opacity=0.4 ] (439,108) -- (483,108) -- (483,148) -- (439,148) -- cycle ;
\draw  [fill={rgb, 255:red, 248; green, 231; blue, 28 }  ,fill opacity=0.4 ] (323,108) -- (367,108) -- (367,148) -- (323,148) -- cycle ;
\draw  [fill={rgb, 255:red, 139; green, 87; blue, 42 }  ,fill opacity=0.4 ] (73,109) -- (117,109) -- (117,149) -- (73,149) -- cycle ;
\draw  [fill={rgb, 255:red, 208; green, 2; blue, 27 }  ,fill opacity=0.4 ] (554,106) -- (598,106) -- (598,146) -- (554,146) -- cycle ;
\draw  [fill={rgb, 255:red, 0; green, 0; blue, 0 }  ,fill opacity=0.4 ] (13,108) -- (57,108) -- (57,148) -- (13,148) -- cycle ;
\draw  [fill={rgb, 255:red, 155; green, 155; blue, 155 }  ,fill opacity=0.4 ] (200,108) -- (244,108) -- (244,148) -- (200,148) -- cycle ;
\draw  [fill={rgb, 255:red, 245; green, 166; blue, 35 }  ,fill opacity=0.4 ] (609,106) -- (653,106) -- (653,146) -- (609,146) -- cycle ;
\draw   (263,109) -- (307,109) -- (307,149) -- (263,149) -- cycle ;
\draw   (137,107) -- (181,107) -- (181,147) -- (137,147) -- cycle ;
\draw    (36,145) -- (111,321) ;
\draw [shift={(111,321)}, rotate = 66.92] [color={rgb, 255:red, 0; green, 0; blue, 0 }  ][fill={rgb, 255:red, 0; green, 0; blue, 0 }  ][line width=0.75]      (0, 0) circle [x radius= 3.35, y radius= 3.35]   ;
\draw [shift={(36,145)}, rotate = 66.92] [color={rgb, 255:red, 0; green, 0; blue, 0 }  ][fill={rgb, 255:red, 0; green, 0; blue, 0 }  ][line width=0.75]      (0, 0) circle [x radius= 3.35, y radius= 3.35]   ;
\draw    (221,139) -- (111,321) ;
\draw [shift={(111,321)}, rotate = 121.15] [color={rgb, 255:red, 0; green, 0; blue, 0 }  ][fill={rgb, 255:red, 0; green, 0; blue, 0 }  ][line width=0.75]      (0, 0) circle [x radius= 3.35, y radius= 3.35]   ;
\draw [shift={(221,139)}, rotate = 121.15] [color={rgb, 255:red, 0; green, 0; blue, 0 }  ][fill={rgb, 255:red, 0; green, 0; blue, 0 }  ][line width=0.75]      (0, 0) circle [x radius= 3.35, y radius= 3.35]   ;
\draw    (629,139) -- (111,321) ;
\draw [shift={(111,321)}, rotate = 160.64] [color={rgb, 255:red, 0; green, 0; blue, 0 }  ][fill={rgb, 255:red, 0; green, 0; blue, 0 }  ][line width=0.75]      (0, 0) circle [x radius= 3.35, y radius= 3.35]   ;
\draw [shift={(629,139)}, rotate = 160.64] [color={rgb, 255:red, 0; green, 0; blue, 0 }  ][fill={rgb, 255:red, 0; green, 0; blue, 0 }  ][line width=0.75]      (0, 0) circle [x radius= 3.35, y radius= 3.35]   ;
\draw    (404,143) -- (341,318) ;
\draw [shift={(341,318)}, rotate = 109.8] [color={rgb, 255:red, 0; green, 0; blue, 0 }  ][fill={rgb, 255:red, 0; green, 0; blue, 0 }  ][line width=0.75]      (0, 0) circle [x radius= 3.35, y radius= 3.35]   ;
\draw [shift={(404,143)}, rotate = 109.8] [color={rgb, 255:red, 0; green, 0; blue, 0 }  ][fill={rgb, 255:red, 0; green, 0; blue, 0 }  ][line width=0.75]      (0, 0) circle [x radius= 3.35, y radius= 3.35]   ;
\draw    (462,139) -- (341,318) ;
\draw [shift={(341,318)}, rotate = 124.06] [color={rgb, 255:red, 0; green, 0; blue, 0 }  ][fill={rgb, 255:red, 0; green, 0; blue, 0 }  ][line width=0.75]      (0, 0) circle [x radius= 3.35, y radius= 3.35]   ;
\draw [shift={(462,139)}, rotate = 124.06] [color={rgb, 255:red, 0; green, 0; blue, 0 }  ][fill={rgb, 255:red, 0; green, 0; blue, 0 }  ][line width=0.75]      (0, 0) circle [x radius= 3.35, y radius= 3.35]   ;
\draw    (520,139) -- (341,318) ;
\draw [shift={(341,318)}, rotate = 135] [color={rgb, 255:red, 0; green, 0; blue, 0 }  ][fill={rgb, 255:red, 0; green, 0; blue, 0 }  ][line width=0.75]      (0, 0) circle [x radius= 3.35, y radius= 3.35]   ;
\draw [shift={(520,139)}, rotate = 135] [color={rgb, 255:red, 0; green, 0; blue, 0 }  ][fill={rgb, 255:red, 0; green, 0; blue, 0 }  ][line width=0.75]      (0, 0) circle [x radius= 3.35, y radius= 3.35]   ;
\draw    (581,139) -- (535,321) ;
\draw [shift={(535,321)}, rotate = 104.18] [color={rgb, 255:red, 0; green, 0; blue, 0 }  ][fill={rgb, 255:red, 0; green, 0; blue, 0 }  ][line width=0.75]      (0, 0) circle [x radius= 3.35, y radius= 3.35]   ;
\draw [shift={(581,139)}, rotate = 104.18] [color={rgb, 255:red, 0; green, 0; blue, 0 }  ][fill={rgb, 255:red, 0; green, 0; blue, 0 }  ][line width=0.75]      (0, 0) circle [x radius= 3.35, y radius= 3.35]   ;
\draw    (96,139) -- (535,321) ;
\draw [shift={(535,321)}, rotate = 22.52] [color={rgb, 255:red, 0; green, 0; blue, 0 }  ][fill={rgb, 255:red, 0; green, 0; blue, 0 }  ][line width=0.75]      (0, 0) circle [x radius= 3.35, y radius= 3.35]   ;
\draw [shift={(96,139)}, rotate = 22.52] [color={rgb, 255:red, 0; green, 0; blue, 0 }  ][fill={rgb, 255:red, 0; green, 0; blue, 0 }  ][line width=0.75]      (0, 0) circle [x radius= 3.35, y radius= 3.35]   ;
\draw    (349,140) -- (535,321) ;
\draw [shift={(535,321)}, rotate = 44.22] [color={rgb, 255:red, 0; green, 0; blue, 0 }  ][fill={rgb, 255:red, 0; green, 0; blue, 0 }  ][line width=0.75]      (0, 0) circle [x radius= 3.35, y radius= 3.35]   ;
\draw [shift={(349,140)}, rotate = 44.22] [color={rgb, 255:red, 0; green, 0; blue, 0 }  ][fill={rgb, 255:red, 0; green, 0; blue, 0 }  ][line width=0.75]      (0, 0) circle [x radius= 3.35, y radius= 3.35]   ;

\draw (721,21) node    {$$};
\draw (701,71) node    {$$};
\draw (203,360.4) node [anchor=north west][inner sep=0.75pt]  [font=\Large]  {$\ldots$};

	\end{tikzpicture}}
	\caption{This figure represents a bipartite graph in which every node of one side is an item and each node of the other side in a set of consecutive elements in the allocated elements of an easily satisfiable agent.}\label{figure:matching}
\end{figure}
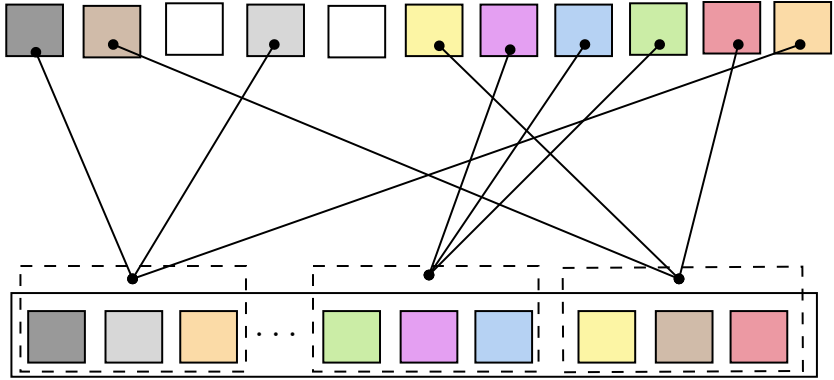

More precisely, we sort the allocated items of each easily-satisfiable agent based on their auxiliary linear function and make a group of every $\alpha$ consecutive elements in this order. (Recall that $\alpha$ is an upper bound on the number of bundles of $A_1, A_2, \ldots, A_{\hat{n}}$ that contain the same item). If the number of allocated items of an easily-satisfiable agent is not divisible by $\alpha$, we ignore the ones that have the least value in the auxiliary function to make their count divisible by $\alpha$. We then construct a bipartite graph as shown in \Cref{figure:matching}. In this graph, each node of the upper side represents an item and each node of the lower side is a group of consecutive $\alpha$ items in the allocated items of an  easily-satisfiable agent. We add an edge between two nodes if the corresponding item of the upper side node is in the group of items represented by the lower side node. We leverage the Hall theorem to show that the bipartite graph has a matching that covers all of the lower side nodes. The matching then shows how elements are given to the  easily-satisfiable agents. We show in the proof of Lemma~\ref{lemma:second} that such an allocation satisfies the desired guarantees of the easily-satisfiable agents.

For the not-easily-satisfiable agents, we use a random method to construct the allocation. Each item of the original multiallocation will be uniformly  and randomly allocated to one of the agents that received that item in the original multiallocation $A_1, A_2, \ldots, A_{\hat{n}}$. Using the concentration bound of Lemma~\ref{lemma:1}, we show that this allocation provides the desired guarantee with non-zero probability as well (therefore our desired allocation exists). By combining the two allocations made for the two categories, we obtain a 2-multiallocation 
for the agents. 

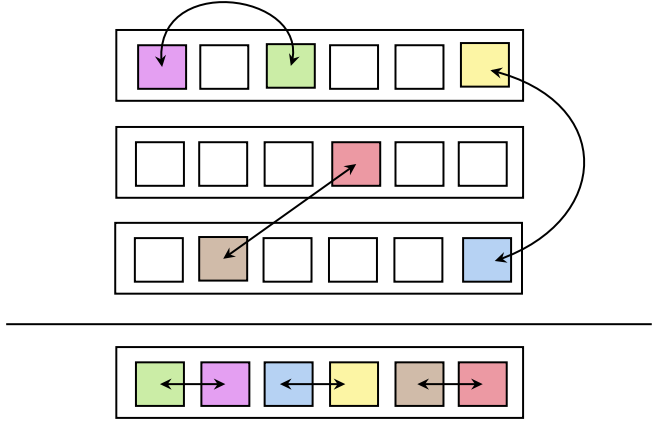
\begin{figure}[t]
	
	\centering
	\tikzset{every picture/.style={line width=0.75pt}} 
	\scalebox{1}{
	\begin{tikzpicture}[x=0.75pt,y=0.75pt,yscale=-0.55,xscale=0.55]
	
	\draw  [fill={rgb, 255:red, 126; green, 211; blue, 33 }  ,fill opacity=0.4 ] (159,346) -- (203,346) -- (203,386) -- (159,386) -- cycle ;
	\draw  [fill={rgb, 255:red, 189; green, 16; blue, 224 }  ,fill opacity=0.4 ] (219,346) -- (263,346) -- (263,386) -- (219,386) -- cycle ;
	\draw  [fill={rgb, 255:red, 74; green, 144; blue, 226 }  ,fill opacity=0.4 ] (277,346) -- (321,346) -- (321,386) -- (277,386) -- cycle ;
	\draw  [fill={rgb, 255:red, 248; green, 231; blue, 28 }  ,fill opacity=0.4 ] (337,346) -- (381,346) -- (381,386) -- (337,386) -- cycle ;
	\draw  [fill={rgb, 255:red, 139; green, 87; blue, 42 }  ,fill opacity=0.4 ] (397,346) -- (441,346) -- (441,386) -- (397,386) -- cycle ;
	\draw  [fill={rgb, 255:red, 208; green, 2; blue, 27 }  ,fill opacity=0.4 ] (455,346) -- (499,346) -- (499,386) -- (455,386) -- cycle ;
	\draw   (141,332) -- (514,332) -- (514,397) -- (141,397) -- cycle ;
	\draw    (420,366) -- (474,366) ;
	\draw [shift={(477,366)}, rotate = 180] [fill={rgb, 255:red, 0; green, 0; blue, 0 }  ][line width=0.08]  [draw opacity=0] (10.72,-5.15) -- (0,0) -- (10.72,5.15) -- (7.12,0) -- cycle    ;
	\draw [shift={(417,366)}, rotate = 0] [fill={rgb, 255:red, 0; green, 0; blue, 0 }  ][line width=0.08]  [draw opacity=0] (10.72,-5.15) -- (0,0) -- (10.72,5.15) -- (7.12,0) -- cycle    ;
	\draw    (294,366) -- (348,366) ;
	\draw [shift={(351,366)}, rotate = 180] [fill={rgb, 255:red, 0; green, 0; blue, 0 }  ][line width=0.08]  [draw opacity=0] (10.72,-5.15) -- (0,0) -- (10.72,5.15) -- (7.12,0) -- cycle    ;
	\draw [shift={(291,366)}, rotate = 0] [fill={rgb, 255:red, 0; green, 0; blue, 0 }  ][line width=0.08]  [draw opacity=0] (10.72,-5.15) -- (0,0) -- (10.72,5.15) -- (7.12,0) -- cycle    ;
	\draw    (184,366) -- (238,366) ;
	\draw [shift={(241,366)}, rotate = 180] [fill={rgb, 255:red, 0; green, 0; blue, 0 }  ][line width=0.08]  [draw opacity=0] (10.72,-5.15) -- (0,0) -- (10.72,5.15) -- (7.12,0) -- cycle    ;
	\draw [shift={(181,366)}, rotate = 0] [fill={rgb, 255:red, 0; green, 0; blue, 0 }  ][line width=0.08]  [draw opacity=0] (10.72,-5.15) -- (0,0) -- (10.72,5.15) -- (7.12,0) -- cycle    ;
	\draw   (158,232) -- (202,232) -- (202,272) -- (158,272) -- cycle ;
	\draw   (276,232) -- (320,232) -- (320,272) -- (276,272) -- cycle ;
	\draw   (336,232) -- (380,232) -- (380,272) -- (336,272) -- cycle ;
	\draw   (396,232) -- (440,232) -- (440,272) -- (396,272) -- cycle ;
	\draw   (140,218) -- (513,218) -- (513,283) -- (140,283) -- cycle ;
	\draw  [fill={rgb, 255:red, 139; green, 87; blue, 42 }  ,fill opacity=0.4 ] (217,231) -- (261,231) -- (261,271) -- (217,271) -- cycle ;
	\draw   (159,144) -- (203,144) -- (203,184) -- (159,184) -- cycle ;
	\draw   (277,144) -- (321,144) -- (321,184) -- (277,184) -- cycle ;
	\draw   (397,144) -- (441,144) -- (441,184) -- (397,184) -- cycle ;
	\draw   (455,144) -- (499,144) -- (499,184) -- (455,184) -- cycle ;
	\draw   (141,130) -- (514,130) -- (514,195) -- (141,195) -- cycle ;
	\draw   (217,144) -- (261,144) -- (261,184) -- (217,184) -- cycle ;
	\draw  [fill={rgb, 255:red, 208; green, 2; blue, 27 }  ,fill opacity=0.4 ] (339,144) -- (383,144) -- (383,184) -- (339,184) -- cycle ;
	\draw    (241.44,249.26) -- (358.56,165.74) ;
	\draw [shift={(361,164)}, rotate = 144.51] [fill={rgb, 255:red, 0; green, 0; blue, 0 }  ][line width=0.08]  [draw opacity=0] (10.72,-5.15) -- (0,0) -- (10.72,5.15) -- (7.12,0) -- cycle    ;
	\draw [shift={(239,251)}, rotate = 324.51] [fill={rgb, 255:red, 0; green, 0; blue, 0 }  ][line width=0.08]  [draw opacity=0] (10.72,-5.15) -- (0,0) -- (10.72,5.15) -- (7.12,0) -- cycle    ;
	\draw   (337,55) -- (381,55) -- (381,95) -- (337,95) -- cycle ;
	\draw   (397,55) -- (441,55) -- (441,95) -- (397,95) -- cycle ;
	\draw   (141,41) -- (514,41) -- (514,106) -- (141,106) -- cycle ;
	\draw   (218,55) -- (262,55) -- (262,95) -- (218,95) -- cycle ;
	\draw  [fill={rgb, 255:red, 248; green, 231; blue, 28 }  ,fill opacity=0.4 ] (457,53) -- (501,53) -- (501,93) -- (457,93) -- cycle ;
	\draw  [fill={rgb, 255:red, 74; green, 144; blue, 226 }  ,fill opacity=0.4 ] (459,232) -- (503,232) -- (503,272) -- (459,272) -- cycle ;
	\draw    (491.18,251.93) .. controls (594.59,215.95) and (599.78,106.82) .. (485.73,78.42) ;
	\draw [shift={(484,78)}, rotate = 13.46] [fill={rgb, 255:red, 0; green, 0; blue, 0 }  ][line width=0.08]  [draw opacity=0] (10.72,-5.15) -- (0,0) -- (10.72,5.15) -- (7.12,0) -- cycle    ;
	\draw [shift={(488,253)}, rotate = 341.89] [fill={rgb, 255:red, 0; green, 0; blue, 0 }  ][line width=0.08]  [draw opacity=0] (10.72,-5.15) -- (0,0) -- (10.72,5.15) -- (7.12,0) -- cycle    ;
	\draw  [fill={rgb, 255:red, 189; green, 16; blue, 224 }  ,fill opacity=0.4 ] (161,55) -- (205,55) -- (205,95) -- (161,95) -- cycle ;
	\draw  [fill={rgb, 255:red, 126; green, 211; blue, 33 }  ,fill opacity=0.4 ] (279,54) -- (323,54) -- (323,94) -- (279,94) -- cycle ;
	\draw    (301.87,70.97) .. controls (318.28,6.08) and (173.14,-12.3) .. (182.66,72.4) ;
	\draw [shift={(183,75)}, rotate = 261.69] [fill={rgb, 255:red, 0; green, 0; blue, 0 }  ][line width=0.08]  [draw opacity=0] (10.72,-5.15) -- (0,0) -- (10.72,5.15) -- (7.12,0) -- cycle    ;
	\draw [shift={(301,74)}, rotate = 287.93] [fill={rgb, 255:red, 0; green, 0; blue, 0 }  ][line width=0.08]  [draw opacity=0] (10.72,-5.15) -- (0,0) -- (10.72,5.15) -- (7.12,0) -- cycle    ;
	\draw    (40,311) -- (632,311) ;
	
	\draw (721,21) node    {$$};
	\draw (701,71) node    {$$};

	\end{tikzpicture}}
	\caption{Each rectangle illustrates an agent and the squares inside the rectangles represent the items given to the agents. The agent below the line is easily satisfiable and the other agents are not-easily satisfiable. Except for the white squares, the squares that have the same color are the same items.}\label{figure:combine}
\end{figure}
Finally, we introduce a correlated randomized allocation technique that turns the 2-multiallocation into an allocation that satisfies the condition of \Cref{lemma:second}. The idea behind this correlated randomized allocation is shown in \Cref{figure:combine}. Notice that each item which is shared by more than one agent is shared by an easily-satisfiable and a not-easily-satisfiable agent. As aforementioned, for each easily-satisfiable agent, we introduce an auxiliary linear function for the items allocated to it. For an easily-satisfiable agent $\hat{\agent}_i$ we sort the allocated items to it based on her auxiliary valuation function. We then pair her allocated items as follows: we pair the first two items (the ones that have the highest utility in her auxiliary function), we then pair the next two items and so on. If the number of items allocated to that agent is odd, we add a dummy item with value 0 to her bundle. We run the same procedure for all easily-satisfiable agents.

In order to turn our 2-multiallocation into a valid allocation we run the following procedure:
\begin{itemize}
	\item If an item is given to a single agent in the 2-multiallocation it will be given to the same agent in the allocation as well.
	\item For each two items $(\hat{\ite}_x, \hat{\ite}_y)$ that are paired and are given to different not-easily-satisfiable agents, we flip a coin. If the outcome is heads, we give item $\hat{\ite}_x$ to the easily-satisfiable agent who received it in the 2-multiallocation and we give item $\hat{\ite}_y$ to the not-easily-satisfiable agent that received $\hat{\ite}_y$ in the 2-multiallocation. Otherwise, we give item $\hat{\ite}_x$ to the not-easily-satisfiable agent who received it in the 2-multiallocation and we give item $\hat{\ite}_y$ to the easily-satisfiable agent that received $\hat{\ite}_y$ in the 2-multiallocation.
	\item Up to this point, the only items that are not allocated yet are the paired items that are given to the same not-easily-satisfiable agents in the 2-multiallocation. In this step, each not-easily-satisfiable agent looks at all  such paired items and from each pair chooses one item in a way that their combination maximize her utility. The items that are not chosen by the not-easily-satisfiable agents will go to the easily-satisfiable agents that own them in the 2-multiallocation.
\end{itemize} 
We show in the proof of Lemma~\ref{lemma:second} that such a procedure guarantees our desired conditions with non-zero probability (and thus there exists an allocation that satisfies the conditions of Lemma~\ref{lemma:second}).

\subsection{Approximate Solutions}\label{sec:o2}
We begin by stating a rather simple yet very important corollary of Lemma~\ref{lemma:second}. This provides a direct reduction from the \MMS\ problem to a relaxed version of the \MMS\ problem wherein items are allowed to be given to multiple agents.

\vspace{0.2cm}
{\noindent \textbf{Lemma} \ref{lemma:reduction} [restated informally, a corollary of Lemma~\ref{lemma:second}]. \textit{The existence of an $\alpha$-multiallocation that guarantees a $1/\eta$-\MMS\ approximation for the subadditive maximin share problem leads to the existence of a $1/(c \alpha \eta  (\log \alpha+\log \log n))$-\MMS\ guarantee for the subadditive maximin share problem for some constant $c = O(1)$.\\}}

In what follows, we present several methods to show the existence of multiallocations with provable \MMS\ guarantees.

\subsubsection{Warm-up 1: a $1/O(\log n \log \log n)$-\MMS\ guarantee}\label{section:trivial}
To show the effectiveness of Lemma~\ref{lemma:second}, we state a simple method that leads to a $1/O(\log n \log \log n)$-\MMS\ guarantees for the subadditive maximine share problem. Recall that this result has already been presented by \textcite{maseed123}. However, here we simplify the proof using Lemma~\ref{lemma:second}.

It is proven in~\cite{maseed123}, that for any subset $Q \subseteq N$ of agents, there exists an allocation of items to the agents of $Q$ such that at least a constant fraction of agents in $Q$ receive a bundle whose value to them is at least a constant fraction of their \MMS\ value. Therefore we can construct an $O(\log n)$-multiallocation that guarantees a constant \MMS\ guarantee for the agents in the following way: We start by $Q = N$ and find an allocation of items to agents of $Q$ in a way that a constant fraction of the agents in $Q$ receive a bundle whose value to them is a constant fraction of their \MMS\ value. We then update $Q$ by removing agents whose allocated bundles provide a constant fraction of their \MMS\ values and repeat the same procedure for the rest of the agents. Note that since each time the size of $Q$ is multiplied by a constant factor, the algorithm terminates after $O(\log n)$ iterations and thus this leads to an $O(\log n)$-multiallocation that guarantees a constant fraction of the \MMS\ value. This in addition to the reduction presented in Lemma~\ref{lemma:reduction} leads to a $1/O(\log n  \log \log n)$-\MMS\ guarantee for the maximin share problem with subadditive agents.

\vspace{0.2cm}
{\noindent \textbf{Theorem} \ref{theorem:w1} [restated informally]. \textit{The maximin share problem with subadditive agents admits a $1/O(\log n  \log \log n)$-\MMS\ guarantee.\\}}

\subsubsection{Warm-up 2: Improving to Sublogarithmic for Polynomial $m$}
We are now ready to present our first idea to go beyond logarithmic guarantees. To this end, we prove a more advanced bound on the partial allocation of items to agents. We show in Lemma~\ref{lemma:third} that for any subset $Q \subseteq N$ there exist multiple disjoint allocations to agents such that at least a constant fraction of the agents of $Q$ receive a bundle whose value to them is a constant fraction of their \MMS\ values.

\vspace{0.2cm}
{\noindent \textbf{Lemma} \ref{lemma:third} [restated informally]. \textit{Let $Q \subseteq \agents$ be a subset of agents. For $k = |\agents| / (6|Q|) $, there exists a subset $Q' \subseteq Q$ and $\lceil k \rceil$ disjoint allocations  $\mathcal{\alloc}^1, \mathcal{\alloc}^2,\ldots, \mathcal{\alloc}^{\lceil k \rceil}$ of items to agents of $Q'$ such that  $|Q'| = \Omega(|Q|)$ and for every agent $a_{i} \in Q'$ and $1 \leq j \leq \lceil k \rceil$ we have 
		$\valu_{i}(A^j_{i}) \geq c$ for some  $c = \Omega(1)$.\\}}

We bring the proof of Lemma~\ref{lemma:third} in \Cref{sec:sublogarithmicapprox}. Here we show how this result can lead to an improved bound on \MMS\ allocations. Similar to what we explained earlier, we start by $Q = N$ and iteratively construct many disjoint allocations for a constant fraction of agents in $Q$ and remove them from $Q$. The key difference to what we did previously is that this time, instead of a single allocation we give them multiple disjoint allocations due to Lemma~\ref{lemma:third}. After the termination of the algorithm in $O(\log n)$ steps, we construct a multiallocation by randomly giving one of the bundles allocated to each agent in any of the $O(\log n)$ rounds of the algorithm. Although each item may be given to $O(\log n)$ different agents throughout the procedure, we show that with non-zero probability, no item appears in more than $O(\sqrt{\log |M|})$ randomly chosen bundles. This married with Lemma~\ref{lemma:reduction} yields an improved bound for the cases where the number of items is polynomial.

\vspace{0.2cm}
{\noindent \textbf{Theorem} \ref{theorem:w2} [restated informally]. \textit{The maximin share problem with subadditive agents admits a $1/O(\sqrt{\log m} \log \log m)$-\MMS\ guarantee.\\}}

\subsubsection{Main Contribution: A $1/O((\log \log n)^2)$-\MMS\ Guarantee}
Finally, we are ready to present our main contribution. We show that the maximin share problem with subadditive agents admits a $\nicefrac{1}{O((\log \log n)^2)}$-\MMS\ guarantee. The blueprint of the proof is much like what we explained in Section~\ref{section:trivial}. However, here we prove a more efficient partial allocation lemma and as a result, we create our multiallocation in $O(\log \log n)$ rounds instead of $O(\log n)$ rounds. 

\vspace{0.2cm}
{\noindent \textbf{Lemma} \ref{lemma:fourth} [restated informally]. \textit{Let $Q \subseteq N$ be a subset of agents. There exists a subset $Q' = \{a_{x_1}, a_{x_2}, \ldots,a_{x_{|Q'|}}\} \subseteq Q$ and an allocation  $A_{x_1},A_{x_2},\ldots,A_{x_{|Q'|}}$ of items to agents of $Q'$ such that $|Q'| \geq |Q|\frac{k}{k+1}$ for $k = \lfloor n/|Q| \rfloor$ and for every $a_{x_i} \in Q'$  we have 
		$\valu_{x_i}(A_{x_i}) \geq c$ for some  $c = \Omega(1)$.}}

Leveraging Lemma~\ref{lemma:fourth}, one can create a $\nicefrac{1}{O(\log \log n)}$-multiallocation in which every agent receives a bundle whose value to her is a constant fraction of her \MMS\ value. This is basically similar to what we explained in Section~\ref{section:trivial} except that due to the guarantee of Lemma~\ref{lemma:fourth}, our algorithm terminates after $O(\log \log n)$ iterations. This married with Lemma~\ref{lemma:reduction} gives us an improved bound of $\nicefrac{1}{O((\log \log n)^2)}$-\MMS\ guarantee.

\vspace{0.2cm}
{\noindent \textbf{Theorem} \ref{theorem:main} [restated informally]. \textit{The maximin share problem with subadditive agents admits a $\nicefrac{1}{O((\log \log n)^2)}$-\MMS\ guarantee.\\}}

The challenging part of the analysis however is the proof of Lemma~\ref{lemma:fourth}. This is in fact our deepest technical contribution and we believe it will find its application in future work as well. Our method is based on the seminal work of~\textcite{feige2009maximizing} in which the author presents a 2-approximation algorithm for revenue maximization of subadditive agents. We build on the rounding technique that Feige uses to prove a bound of 2 on the integrality gap of the configuration LP. While we generalize the method to prove a better bound for our case, we believe we also make it more intuitive and also applicable to other scenarios. As we discuss later, the new \textit{guiding graph} and \textit{matching technique} that we introduce to the rounding technique of~\textcite{feige2009maximizing} are important and necessary parts of the analysis. However, since the goal of~\cite{feige2009maximizing} is to show a weaker statement, their guiding graph is made simpler and instead of the matching technique they use an edge orientation technique both of which only work for their special need. As a result, while mathematically correct, there is little intuition as to why the combination of edge orientation and their version of guiding graph leads to the desired outcome. As we show later, both their guiding graph and edge orientation technique can be thought of as special cases of our guiding graph and matching algorithm and thus we believe our generalization also adds intuition and simplifies the technique. Below we outline our technique.

\begin{figure}

\tikzset{every picture/.style={line width=0.75pt}} 

\begin{center}

\scalebox{1}{
\begin{tikzpicture}[x=0.75pt,y=0.75pt,yscale=0.45,xscale=0.45]

\draw   (17,332) -- (642,332) -- (642,397) -- (17,397) -- cycle ;
\draw  [fill={rgb, 255:red, 0; green, 0; blue, 0 }  ,fill opacity=1 ] (38,367) .. controls (38,363.69) and (40.69,361) .. (44,361) .. controls (47.31,361) and (50,363.69) .. (50,367) .. controls (50,370.31) and (47.31,373) .. (44,373) .. controls (40.69,373) and (38,370.31) .. (38,367) -- cycle ;
\draw  [fill={rgb, 255:red, 0; green, 0; blue, 0 }  ,fill opacity=1 ] (68,367) .. controls (68,363.69) and (70.69,361) .. (74,361) .. controls (77.31,361) and (80,363.69) .. (80,367) .. controls (80,370.31) and (77.31,373) .. (74,373) .. controls (70.69,373) and (68,370.31) .. (68,367) -- cycle ;
\draw  [fill={rgb, 255:red, 0; green, 0; blue, 0 }  ,fill opacity=1 ] (98,367) .. controls (98,363.69) and (100.69,361) .. (104,361) .. controls (107.31,361) and (110,363.69) .. (110,367) .. controls (110,370.31) and (107.31,373) .. (104,373) .. controls (100.69,373) and (98,370.31) .. (98,367) -- cycle ;
\draw  [fill={rgb, 255:red, 0; green, 0; blue, 0 }  ,fill opacity=1 ] (128,367) .. controls (128,363.69) and (130.69,361) .. (134,361) .. controls (137.31,361) and (140,363.69) .. (140,367) .. controls (140,370.31) and (137.31,373) .. (134,373) .. controls (130.69,373) and (128,370.31) .. (128,367) -- cycle ;
\draw  [fill={rgb, 255:red, 0; green, 0; blue, 0 }  ,fill opacity=1 ] (158,367) .. controls (158,363.69) and (160.69,361) .. (164,361) .. controls (167.31,361) and (170,363.69) .. (170,367) .. controls (170,370.31) and (167.31,373) .. (164,373) .. controls (160.69,373) and (158,370.31) .. (158,367) -- cycle ;
\draw  [fill={rgb, 255:red, 0; green, 0; blue, 0 }  ,fill opacity=1 ] (188,367) .. controls (188,363.69) and (190.69,361) .. (194,361) .. controls (197.31,361) and (200,363.69) .. (200,367) .. controls (200,370.31) and (197.31,373) .. (194,373) .. controls (190.69,373) and (188,370.31) .. (188,367) -- cycle ;
\draw  [fill={rgb, 255:red, 0; green, 0; blue, 0 }  ,fill opacity=1 ] (220,367) .. controls (220,363.69) and (222.69,361) .. (226,361) .. controls (229.31,361) and (232,363.69) .. (232,367) .. controls (232,370.31) and (229.31,373) .. (226,373) .. controls (222.69,373) and (220,370.31) .. (220,367) -- cycle ;
\draw  [fill={rgb, 255:red, 0; green, 0; blue, 0 }  ,fill opacity=1 ] (250,367) .. controls (250,363.69) and (252.69,361) .. (256,361) .. controls (259.31,361) and (262,363.69) .. (262,367) .. controls (262,370.31) and (259.31,373) .. (256,373) .. controls (252.69,373) and (250,370.31) .. (250,367) -- cycle ;
\draw  [fill={rgb, 255:red, 0; green, 0; blue, 0 }  ,fill opacity=1 ] (280,367) .. controls (280,363.69) and (282.69,361) .. (286,361) .. controls (289.31,361) and (292,363.69) .. (292,367) .. controls (292,370.31) and (289.31,373) .. (286,373) .. controls (282.69,373) and (280,370.31) .. (280,367) -- cycle ;
\draw  [fill={rgb, 255:red, 0; green, 0; blue, 0 }  ,fill opacity=1 ] (378,367) .. controls (378,363.69) and (380.69,361) .. (384,361) .. controls (387.31,361) and (390,363.69) .. (390,367) .. controls (390,370.31) and (387.31,373) .. (384,373) .. controls (380.69,373) and (378,370.31) .. (378,367) -- cycle ;
\draw  [fill={rgb, 255:red, 0; green, 0; blue, 0 }  ,fill opacity=1 ] (408,367) .. controls (408,363.69) and (410.69,361) .. (414,361) .. controls (417.31,361) and (420,363.69) .. (420,367) .. controls (420,370.31) and (417.31,373) .. (414,373) .. controls (410.69,373) and (408,370.31) .. (408,367) -- cycle ;
\draw  [fill={rgb, 255:red, 0; green, 0; blue, 0 }  ,fill opacity=1 ] (438,367) .. controls (438,363.69) and (440.69,361) .. (444,361) .. controls (447.31,361) and (450,363.69) .. (450,367) .. controls (450,370.31) and (447.31,373) .. (444,373) .. controls (440.69,373) and (438,370.31) .. (438,367) -- cycle ;
\draw  [fill={rgb, 255:red, 0; green, 0; blue, 0 }  ,fill opacity=1 ] (468,367) .. controls (468,363.69) and (470.69,361) .. (474,361) .. controls (477.31,361) and (480,363.69) .. (480,367) .. controls (480,370.31) and (477.31,373) .. (474,373) .. controls (470.69,373) and (468,370.31) .. (468,367) -- cycle ;
\draw  [fill={rgb, 255:red, 0; green, 0; blue, 0 }  ,fill opacity=1 ] (498,367) .. controls (498,363.69) and (500.69,361) .. (504,361) .. controls (507.31,361) and (510,363.69) .. (510,367) .. controls (510,370.31) and (507.31,373) .. (504,373) .. controls (500.69,373) and (498,370.31) .. (498,367) -- cycle ;
\draw  [fill={rgb, 255:red, 0; green, 0; blue, 0 }  ,fill opacity=1 ] (528,367) .. controls (528,363.69) and (530.69,361) .. (534,361) .. controls (537.31,361) and (540,363.69) .. (540,367) .. controls (540,370.31) and (537.31,373) .. (534,373) .. controls (530.69,373) and (528,370.31) .. (528,367) -- cycle ;
\draw  [fill={rgb, 255:red, 0; green, 0; blue, 0 }  ,fill opacity=1 ] (560,367) .. controls (560,363.69) and (562.69,361) .. (566,361) .. controls (569.31,361) and (572,363.69) .. (572,367) .. controls (572,370.31) and (569.31,373) .. (566,373) .. controls (562.69,373) and (560,370.31) .. (560,367) -- cycle ;
\draw  [fill={rgb, 255:red, 0; green, 0; blue, 0 }  ,fill opacity=1 ] (590,367) .. controls (590,363.69) and (592.69,361) .. (596,361) .. controls (599.31,361) and (602,363.69) .. (602,367) .. controls (602,370.31) and (599.31,373) .. (596,373) .. controls (592.69,373) and (590,370.31) .. (590,367) -- cycle ;
\draw  [fill={rgb, 255:red, 0; green, 0; blue, 0 }  ,fill opacity=1 ] (620,367) .. controls (620,363.69) and (622.69,361) .. (626,361) .. controls (629.31,361) and (632,363.69) .. (632,367) .. controls (632,370.31) and (629.31,373) .. (626,373) .. controls (622.69,373) and (620,370.31) .. (620,367) -- cycle ;
\draw   (20,58) -- (143,58) -- (143,123) -- (20,123) -- cycle ;
\draw  [fill={rgb, 255:red, 0; green, 0; blue, 0 }  ,fill opacity=1 ] (36,91) .. controls (36,87.69) and (38.69,85) .. (42,85) .. controls (45.31,85) and (48,87.69) .. (48,91) .. controls (48,94.31) and (45.31,97) .. (42,97) .. controls (38.69,97) and (36,94.31) .. (36,91) -- cycle ;
\draw  [fill={rgb, 255:red, 0; green, 0; blue, 0 }  ,fill opacity=1 ] (116,91) .. controls (116,87.69) and (118.69,85) .. (122,85) .. controls (125.31,85) and (128,87.69) .. (128,91) .. controls (128,94.31) and (125.31,97) .. (122,97) .. controls (118.69,97) and (116,94.31) .. (116,91) -- cycle ;
\draw   (210,58) -- (333,58) -- (333,123) -- (210,123) -- cycle ;
\draw  [fill={rgb, 255:red, 0; green, 0; blue, 0 }  ,fill opacity=1 ] (226,91) .. controls (226,87.69) and (228.69,85) .. (232,85) .. controls (235.31,85) and (238,87.69) .. (238,91) .. controls (238,94.31) and (235.31,97) .. (232,97) .. controls (228.69,97) and (226,94.31) .. (226,91) -- cycle ;
\draw  [fill={rgb, 255:red, 0; green, 0; blue, 0 }  ,fill opacity=1 ] (306,91) .. controls (306,87.69) and (308.69,85) .. (312,85) .. controls (315.31,85) and (318,87.69) .. (318,91) .. controls (318,94.31) and (315.31,97) .. (312,97) .. controls (308.69,97) and (306,94.31) .. (306,91) -- cycle ;
\draw   (510,55) -- (633,55) -- (633,120) -- (510,120) -- cycle ;
\draw  [fill={rgb, 255:red, 0; green, 0; blue, 0 }  ,fill opacity=1 ] (526,88) .. controls (526,84.69) and (528.69,82) .. (532,82) .. controls (535.31,82) and (538,84.69) .. (538,88) .. controls (538,91.31) and (535.31,94) .. (532,94) .. controls (528.69,94) and (526,91.31) .. (526,88) -- cycle ;
\draw  [fill={rgb, 255:red, 0; green, 0; blue, 0 }  ,fill opacity=1 ] (606,88) .. controls (606,84.69) and (608.69,82) .. (612,82) .. controls (615.31,82) and (618,84.69) .. (618,88) .. controls (618,91.31) and (615.31,94) .. (612,94) .. controls (608.69,94) and (606,91.31) .. (606,88) -- cycle ;
\draw    (122,91) -- (162,367) ;
\draw [shift={(162,367)}, rotate = 81.75] [color={rgb, 255:red, 0; green, 0; blue, 0 }  ][fill={rgb, 255:red, 0; green, 0; blue, 0 }  ][line width=0.75]      (0, 0) circle [x radius= 3.35, y radius= 3.35]   ;
\draw [shift={(122,91)}, rotate = 81.75] [color={rgb, 255:red, 0; green, 0; blue, 0 }  ][fill={rgb, 255:red, 0; green, 0; blue, 0 }  ][line width=0.75]      (0, 0) circle [x radius= 3.35, y radius= 3.35]   ;
\draw    (312,91) -- (164,373) ;
\draw [shift={(164,373)}, rotate = 117.69] [color={rgb, 255:red, 0; green, 0; blue, 0 }  ][fill={rgb, 255:red, 0; green, 0; blue, 0 }  ][line width=0.75]      (0, 0) circle [x radius= 3.35, y radius= 3.35]   ;
\draw [shift={(312,91)}, rotate = 117.69] [color={rgb, 255:red, 0; green, 0; blue, 0 }  ][fill={rgb, 255:red, 0; green, 0; blue, 0 }  ][line width=0.75]      (0, 0) circle [x radius= 3.35, y radius= 3.35]   ;
\draw    (532,88) -- (164,373) ;
\draw [shift={(164,373)}, rotate = 142.24] [color={rgb, 255:red, 0; green, 0; blue, 0 }  ][fill={rgb, 255:red, 0; green, 0; blue, 0 }  ][line width=0.75]      (0, 0) circle [x radius= 3.35, y radius= 3.35]   ;
\draw [shift={(532,88)}, rotate = 142.24] [color={rgb, 255:red, 0; green, 0; blue, 0 }  ][fill={rgb, 255:red, 0; green, 0; blue, 0 }  ][line width=0.75]      (0, 0) circle [x radius= 3.35, y radius= 3.35]   ;
\draw    (122,97) -- (44,373) ;
\draw [shift={(44,373)}, rotate = 105.78] [color={rgb, 255:red, 0; green, 0; blue, 0 }  ][fill={rgb, 255:red, 0; green, 0; blue, 0 }  ][line width=0.75]      (0, 0) circle [x radius= 3.35, y radius= 3.35]   ;
\draw [shift={(122,97)}, rotate = 105.78] [color={rgb, 255:red, 0; green, 0; blue, 0 }  ][fill={rgb, 255:red, 0; green, 0; blue, 0 }  ][line width=0.75]      (0, 0) circle [x radius= 3.35, y radius= 3.35]   ;
\draw    (122,91) -- (504,367) ;
\draw [shift={(504,367)}, rotate = 35.85] [color={rgb, 255:red, 0; green, 0; blue, 0 }  ][fill={rgb, 255:red, 0; green, 0; blue, 0 }  ][line width=0.75]      (0, 0) circle [x radius= 3.35, y radius= 3.35]   ;
\draw [shift={(122,91)}, rotate = 35.85] [color={rgb, 255:red, 0; green, 0; blue, 0 }  ][fill={rgb, 255:red, 0; green, 0; blue, 0 }  ][line width=0.75]      (0, 0) circle [x radius= 3.35, y radius= 3.35]   ;
\draw    (122,91) -- (596,367) ;
\draw [shift={(596,367)}, rotate = 30.21] [color={rgb, 255:red, 0; green, 0; blue, 0 }  ][fill={rgb, 255:red, 0; green, 0; blue, 0 }  ][line width=0.75]      (0, 0) circle [x radius= 3.35, y radius= 3.35]   ;
\draw [shift={(122,91)}, rotate = 30.21] [color={rgb, 255:red, 0; green, 0; blue, 0 }  ][fill={rgb, 255:red, 0; green, 0; blue, 0 }  ][line width=0.75]      (0, 0) circle [x radius= 3.35, y radius= 3.35]   ;

\draw (721,21) node    {$$};
\draw (701,71) node    {$$};
\draw (313,365.4) node [anchor=north west][inner sep=0.75pt]  [font=\LARGE]  {$\cdots $};
\draw (57,90.4) node [anchor=north west][inner sep=0.75pt]  [font=\LARGE]  {$\dotsc $};
\draw (247,90.4) node [anchor=north west][inner sep=0.75pt]  [font=\LARGE]  {$\dotsc $};
\draw (547,90.4) node [anchor=north west][inner sep=0.75pt]  [font=\LARGE]  {$\dotsc $};
\draw (395,90.4) node [anchor=north west][inner sep=0.75pt]  [font=\LARGE]  {$\dotsc $};

\end{tikzpicture}}
\end{center}
	\caption{Bottom nodes represent $|Q|$ parts of the allocation nodes and top nodes represent the random-seed nodes. Each random-seed node has exactly one edge to each part of the allocation nodes. Each allocation nodes has exactly $k+1$ edges to the random-seed nodes.}\label{figure:guiding-graph}
\end{figure}
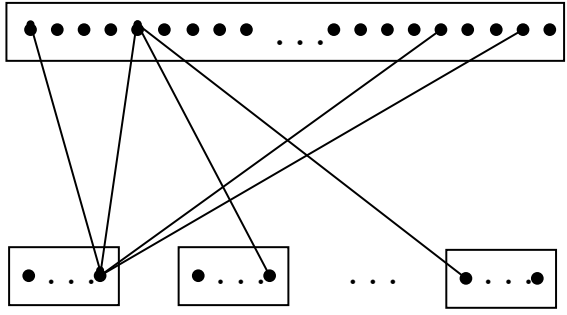

Recall that in Lemma~\ref{lemma:fourth} we have an \MMS\ problem in which every agent has $n$ disjoint choices, each of which provides a value of at least 1 to her. However, we only wish to allocate items to a smaller subset of agents $Q$. Recall that we denote the ratio of $n$ over $|Q|$ by $k = \lfloor n/|Q| \rfloor$. The goal is to prove that there exists an allocation that provides a value of at least $1/2$ to at least a $\frac{k}{k+1}$ fraction of agents of $Q$. To prove such an allocation exists, we present a randomized process. We define a guiding graph $\mathcal{G}$ in the following way:
\begin{itemize}
	\item $|\mathcal{G}|$ is of no importance in our analysis, but it has exponentially many vertices and edges.
	\item $\mathcal{G}$ has a large enough girth.
	\item $\mathcal{G}$ is a bipartite graph. We call the vertices of one part \textit{random-seed nodes}  and the vertices of the other part \textit{allocation nodes}.
	\item Each random-seed node is basically one possible outcome of our process. More precisely, in the last step of our allocation, we select one of the random-seed nodes randomly and by looking at the neighbors of that node, we determine which items are allocated to each agent.
	\item Allocation nodes are divided into $|Q|$ parts, each corresponding to one agent. The degree of each allocation node is exactly $k+1$.
	\item Every random-seed node has exactly $|Q|$ edges each corresponding to one agent of $|Q|$. More precisely, each edge is connected to one part of the allocation nodes.
\end{itemize}

Apart from the structure of the guiding graph, the only feature which makes it usable for our purpose is that it has a high girth. We show in Section~\ref{sec:main} that for any girth $g$, there exists a graph with the above structure that has a girth of at least $g$. Notice that each of the allocation nodes corresponds to one agent of $Q$. In the first step of our randomized process, we label each allocation node with one of the $n$ bundles of its corresponding agent whose value to her is at least $1$ uniformly at random. Keep in mind that based on this, each item appears in the label of an allocation node with probability $1/n$. In our randomized process, we plan to select one of the random-seed nodes uniformly at random and by looking at the label of its neighbors determine which items are allocated to each agent. However, at this point, this is not possible since the labels of the neighbors of the random-seed nodes are not disjoint. In other words, if we select one random-seed node and allocate the items based on its neighbors, an item may be given to multiple agents. We resolve this issue by making the following modification:
\begin{itemize}
	\item For each allocation node, we label each of its edges by a subset of its label.
	\item We maintain the following two properties in our labelling: (i) The labels of the edges of each random-seed vertex will be disjoint. (ii) for each allocation node, none of the items of its label is missing in the label of more than one of its edges. In other words, each item in the label of an allocation node is present in the labels of at least $k$ (out of $k+1$) of its edges.
\end{itemize}
While technically such a labelling may not be possible in general, we show in the following that by ignoring some minor details we can assume for the purpose of our discussion that such a labelling exists. The correct statement that we show in Section~\ref{sec:main} is that there exists a labelling that satisfies the above properties for almost all of the nodes (except very few nodes that can be ignored). Let us fix an item $\ite_x$ and show how we decide which edge labels of the guiding graph contain $\ite_x$. To this end, consider an induced subgraph $\mathcal{H}_x$ of $\mathcal{G}$ wherein allocation nodes whose labels do not contain $\ite_x$ are removed. While this is not technically correct, we assume for simplicity that $\mathcal{H}_x$ is a forest. 

Let us first explain why such an assumption is almost without loss of generality. Notice that the expected degree of each random-seed node of $\mathcal{H}_x$ is at most $1/k$. On the other hand, the degree of each allocation node of $\mathcal{H}_x$ is exactly $k+1$. We show in Section~\ref{sec:main} that this implies an upper bound on the expected size of the connected components of $\mathcal{H}_x$. Given that the girth of $\mathcal{G}$ is large enough, we also know that the girth of $\mathcal{H}_x$ is large enough and thus  almost all vertices of $\mathcal{H}_x$ are in connected components that are trees. For these reasons, we assume for simplicity that $\mathcal{H}_x$ is a forest here though we bring detailed discussions in Section~\ref{sec:main}.

Our aim here is to determine $\ite_x$ appears in the label of which edges of  $\mathcal{H}_x$. However, we have two restrictions: (i) For each random-seed node of $\mathcal{H}_x$ at most the label of one of its edges can have $\ite_x$. (ii) for each allocation node of $\mathcal{H}_x$, $\ite_x$ should appear in at least $k$ (out of $k+1$) labels of its edges. To prove that this is possible, we use the Hall theorem~\cite{cormen2022introduction}. Let $Y$ be a subset of the allocation nodes of $\mathcal{H}_x$. The number of edges incident to the vertices of $Y$ is exactly $(k+1)|Y|$. Since $\mathcal{H}_x$ is a forest, there are at least $(k+1)|Y|+1$ distinct endpoints for these edges. $|Y|$ of these endpoints are the vertices of $Y$ and thus at least $k|Y|+1$ of these endpoints are random-seed nodes. This basically means that the number of neighbors of $Y$ in the random-seed part is at least $k$ times the size of $|Y|$. This means that we can find $k$ disjoint matchings of $\mathcal{H}_x$ each of which covers the allocation nodes. On the other hand, each random-seed vertex of $\mathcal{H}_x$ appears in at most one of these matchings. Therefore, if we include $\ite_x$ in the labels of the edges of these matchings both our conditions are satisfied.

We use the same technique to decide whether each item appears in the labels of each edge of $\mathcal{G}$. Finally, we randomly choose one random-seed node of $\mathcal{G}$ and based on the labels of the edges incident to it determine which items are allocated to each agent of $Q$. We then prove that in expectation, at least a $\frac{k}{k+1}$ fraction of agents receive a bundle which is worth at least $1/2$ to them. This basically proves that there exists an allocation that gives a bundle to each agent of $Q$ such that at least a $\frac{k}{k+1}$ fraction of agents receive an allocation that is worth at least $1/2$ to them. The idea behind the proof is the following: Consider an allocation node of $\mathcal{G}$ and let it be corresponding to agent $\agent_{x_i} \in Q$. Let $T$ be the label of this node. We know that the valuation of agent $\agent_{x_i}$ for items of $T$ is at at least $1$. On the other hand, each item of $T$ is present in the labels of all but at most one edge incident to this node. We show in Section~\ref{sec:main} that this implies that the valuation of agent $\agent_{x_i}$ is at least $1/2$ for at least $k$ (out of $k+1$) labels of the edges incident to this node. This basically means that the probability that an agent receives a bundle that is worth at least $1/2$ to her in this process is at least $\frac{k}{k+1}$ which completes the proof. More details about this proof are included in Section~\ref{sec:main}.

Finally, we remark that for the special case of $k=1$, the degree of each allocation vertex of $\mathcal{G}$ would become 2. Thus, we could think of the allocation vertices of $\mathcal{G}$ as edges that connect their two random-seed neighbors. With such a representation, the resulting graph would be equivalent to the guiding graph of \textcite{feige2009maximizing}. Also, in that scenario, the goal of the matching process would be to find a matching from the edges to the vertices of a tree which would be feasible by making the tree rooted at some arbitrary node and then match each vertex to the edge connecting it to its parent. However, this idea is not generalizable to $k > 1$.

\begin{toappendix}
	\section{Concentration Bound}\label{sec:first}
In this section, we prove Lemma~\ref{lemma:1}
\begin{lemma}\label{lemma:1}
	Let for an $\hat{n} \geq 1$ and a ground set of elements ${\hat{M}}$, $f:2^{\hat{M}} \rightarrow \mathbb{R}_+$ be a monotone subadditive function with non-negative values such that $f(S) \leq f({\hat{M}})/2$ holds for any subset $S \subseteq {\hat{M}}$ such that $|S| \leq \frac{40 (\log \hat{n}+1)}{p}$. Let $R$ be a random subset of ${\hat{M}}$ such that each element of ${\hat{M}}$ appears in $R$ with probability $0 \leq p \leq 1$ independently. Then we have:
	$$\mathsf{Pr} \left[ f(R) \geq \frac{f({\hat{M}})p}{120} \right] > 1-1/\hat{n}.$$
\end{lemma}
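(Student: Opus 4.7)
The plan is to prove Lemma~\ref{lemma:1} in two stages: first a lower bound on $\mathbb{E}[f(R)]$, then a concentration step that uses the small-set hypothesis to push the failure probability down to $1/\hat n$.

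\textbf{Stage 1 (expectation bound).} I aim to show $\mathbb{E}[f(R)] = \Omega(f(\hat M)p)$ by the standard subadditive averaging trick. Set $K=\lceil \ln 2/p\rceil$ and take $K$ i.i.d.\ copies $R^1,\ldots,R^K$ of $R$; let $U=\bigcup_j R^j$. Each element lies in $U$ with probability $1-(1-p)^K\ge 1/2$, so a monotone coupling with the complement gives $\mathbb{E}[f(\hat M\setminus U)]\le \mathbb{E}[f(U)]$, and combining this with subadditivity $f(\hat M)\le f(U)+f(\hat M\setminus U)$ yields $\mathbb{E}[f(U)]\ge f(\hat M)/2$. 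A second application of subadditivity gives $f(U)\le \sum_j f(R^j)$, and hence $\mathbb{E}[f(R)]\ge f(\hat M)/(2K)=\Omega(f(\hat M)p)$.

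\textbf{Stage 2 (concentration via the small-set hypothesis).} I take $K'=\lceil 40(\log \hat n+1)/p\rceil$ i.i.d.\ copies $R^1,\ldots,R^{K'}$ of $R$, with union $U'$. For every fixed element $x$, $\Pr[x\notin U'] = (1-p)^{K'}\le \hat n^{-40}$, so $\mathbb{E}[|\hat M\setminus U'|]\le m\hat n^{-40}$. A Markov-type argument (refined if $m$ is extremely large by iterating on blocks of size $T$, using the small-set bound inside each block) shows that with probability at least $1-1/\hat n$ we have $|\hat M\setminus U'|\le T=40(\log \hat n+1)/p$. On this event the small-set assumption forces $f(\hat M\setminus U')\le f(\hat M)/2$, which together with subadditivity gives $\sum_j f(R^j)\ge f(U')\ge f(\hat M)/2$. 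The final ingredient is to exchange this ``sum'' bound for a ``pointwise'' bound on a single $f(R)$: using that the $R^j$ are exchangeable, that each $f(R^j)\le f(\hat M)$, and that $K'p = 40(\log \hat n+1)$ is tuned to balance the number of copies against the desired target $f(\hat M)p/120$, one gets the single-sample bound $f(R)\ge f(\hat M)p/120$ with probability $1-1/\hat n$.

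\textbf{Main obstacle.} The hard step is the last one: turning the bulk bound $\sum_{j=1}^{K'} f(R^j)=\Omega(f(\hat M))$ into a per-sample bound $f(R)\ge f(\hat M)p/120$ with probability $1-1/\hat n$. A naive Markov or reverse-Markov estimate applied to a single $f(R^j)$ yields only a constant-probability bound, and generic concentration inequalities for subadditive functions (Talagrand or Schechtman type) lose a $\log \hat n$ factor because the per-element Lipschitz constant can be as large as $f(\hat M)/2$. The small-set assumption is introduced precisely to avoid this loss: it rules out the pathological ``a tiny heavy subset carries all the value'' scenario by forcing every subset of size up to $T$ to have value at most $f(\hat M)/2$. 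Making this intuition rigorous, and selecting the constants $40$ and $120$ so that the covering probability $\hat n^{-40}$ dominates the polynomial losses incurred in the exchangeability step, is the technical core of the argument.
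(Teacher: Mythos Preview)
Your Stage~1 and the setup of Stage~2 are fine, but the proposal has a genuine gap precisely where you flag it: the ``exchangeability'' step that converts $\sum_{j=1}^{K'} f(R^j)\ge f(\hat M)/2$ (with probability $1-1/\hat n$) into the per-sample bound $f(R)\ge f(\hat M)p/120$ with the same probability is not an argument, and I do not see how to complete it along the lines you suggest. Exchangeability plus the deterministic bound $f(R^j)\le f(\hat M)$ only gives $\Pr[f(R)\ge f(\hat M)/(2K')]\ge 1/K'-o(1)$, which is a constant (in fact $\Theta(p/\log\hat n)$) probability bound, not $1-1/\hat n$. The small-set hypothesis you have used so far constrains the \emph{complement} $\hat M\setminus U'$; it says nothing about how the value $f(U')$ is spread across the individual $R^j$, so it cannot by itself upgrade the sum bound to concentration. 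Your ``Markov-type argument refined by iterating on blocks of size $T$'' is also not spelled out and is suspicious: without a bound on $|\hat M|$, the union $U'$ can miss an arbitrarily large set with constant probability.

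The paper proceeds quite differently. Rather than trying to cover $\hat M$ by copies of $R$, it uses the small-set hypothesis to replace $f$ by an auxiliary monotone subadditive function $\bar f$ with \emph{bounded singleton values}, namely $\bar f(\{x\})\le f(\hat M)p/(80(\log\hat n+1))$ for every $x$, while keeping $\bar f(\hat M)\ge f(\hat M)/2$ (this is exactly where the hypothesis ``every set of size $\le 40(\log\hat n+1)/p$ has value $\le f(\hat M)/2$'' is spent). Once singletons are bounded by $\nu$, one can invoke a black-box concentration inequality for subadditive functions (Theorem~20 of Dobzinski et al.), which gives $\Pr[\bar f(R)\le \mathbb{E}[\bar f(R)]/(5(q+1))-(k+1)\nu/(q+1)]\le (2/q^k)^{1/q}$; the choice $q=2$, $k=2\log\hat n+1$ then yields failure probability $1/\hat n$, and the singleton bound is tuned so that the additive $(k+1)\nu/(q+1)$ term is at most a constant fraction of the target $pf(\hat M)$. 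The expectation bound $\mathbb{E}[\bar f(R)]\ge p\bar f(\hat M)/2\ge pf(\hat M)/4$ (your Stage~1, applied to $\bar f$) closes the proof. The key idea you are missing is this bounded-singleton truncation; without it, no amount of covering or exchangeability will turn the expectation bound into $1-1/\hat n$ concentration.
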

The proof of \Cref{lemma:1} is based the bound of~\cite{dobzinski2024constant} for subadditive set functions. Before we state the proof we bring the bound here.

\begin{theorem}[\cite{dobzinski2024constant}, Theorem 20]\label{theorem:them}
	Let $f : 2^{\hat{M}} \to \mathbb{R}^+$ be a monotone subadditive function, where $f(\{\ite_x\}) \leq \nu$, $\nu > 0$,  
	for every $\ite_x \in \hat{M}$. Then for any integers $k, q \geq 1$, and a random set $R$ where elements appear  
	independently,  
	\[
	\mathsf{Pr}\left[ f(R) \leq \frac{\mathbb{E}[f(R)]}{5(q + 1)} - \frac{(k + 1)\nu}{q + 1} \right] 
	\leq \left( \frac{2}{q^k} \right)^{1/q}.
	\]
\end{theorem}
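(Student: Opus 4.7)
The plan is to combine the Dobzinski--Oren--Vardi concentration bound of Theorem~\ref{theorem:them} with a case analysis on the singleton values of the elements of $\hat M$.

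\textbf{Preliminary subadditive expectation bound.} First I would establish a general lower bound: for any monotone subadditive $f$ and any independent-Bernoulli random set $R_p$, $\mathbb{E}[f(R_p)] \geq p f(\hat M)/2$. At $p=1/2$, subadditivity $f(\hat M) \leq f(R) + f(\hat M\setminus R)$ combined with the symmetry $R \stackrel{d}{=} \hat M \setminus R$ gives $\mathbb{E}[f(R_{1/2})] \geq f(\hat M)/2$. A dyadic doubling step---sampling $R_{2^{-k-1}}$ as a $1/2$-thinning of $R_{2^{-k}}$ and applying the same subadditivity/symmetry argument conditionally---extends this to $\mathbb{E}[f(R_{2^{-k}})] \geq 2^{-k} f(\hat M)$, and monotonicity of $p\mapsto\mathbb{E}[f(R_p)]$ fills in intermediate values.

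\textbf{Case split on heavy singletons.} Set $v := p f(\hat M)/120$, $\hat M_h := \{b\in\hat M : f(\{b\}) > v\}$, and $\tau := 40(\log\hat n + 1)/p$. In \emph{Case~A}, when $|\hat M_h|\geq\tau$, we have $\mathbb{E}[|R\cap\hat M_h|]\geq 40(\log\hat n + 1)$, so a Chernoff bound gives $R\cap\hat M_h\neq\emptyset$ with probability at least $1 - e^{-40(\log\hat n +1)} > 1 - 1/\hat n$; monotonicity then yields $f(R)\geq v = p f(\hat M)/120$. In \emph{Case~B}, when $|\hat M_h|<\tau$, the structural hypothesis gives $f(\hat M_h)\leq f(\hat M)/2$, so subadditivity of $f$ yields $f(\hat M_\ell) \geq f(\hat M)/2$ for $\hat M_\ell := \hat M\setminus\hat M_h$. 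On $\hat M_\ell$ every singleton has value at most $v$, and $R':=R\cap\hat M_\ell$ is still an independent-Bernoulli($p$) random subset. We invoke Theorem~\ref{theorem:them} on $f\!\restriction_{\hat M_\ell}$ with $\nu := v$ and $\mathbb{E}[f(R')]\geq p f(\hat M_\ell)/2 \geq pf(\hat M)/4$, choosing integers $q,k$ so that $(2/q^k)^{1/q}\leq 1/\hat n$ and the implied lower bound on $f(R')$ is at least $pf(\hat M)/120$.

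\textbf{Main obstacle.} The delicate step is the constant-balancing inside Case~B. The failure probability $(2/q^k)^{1/q}\leq 1/\hat n$ forces $k=\Omega(\log\hat n)$ for any bounded $q$, which in turn makes the subtractive correction $(k+1)\nu/(q+1)$ of order $v\cdot\log\hat n$. Hence for a single-threshold split to close, one would need $v\lesssim pf(\hat M)/\log\hat n$, conflicting with the $v=pf(\hat M)/120$ demanded by Case~A for large $\hat n$. I expect to close this gap by refining the split into a logarithmically many singleton-value bands: each band is handled either by a Chernoff-style ``$R$ hits this band'' argument (when the band is populated beyond its own Chernoff threshold) or is absorbed into the low part to which Theorem~\ref{theorem:them} is applied. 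The constants $40$ (in $\tau$) and $120$ (in the conclusion) are calibrated precisely so that the Chernoff exponents in Case~A and the Theorem~\ref{theorem:them} correction in Case~B are simultaneously tamed, and verifying this joint calibration is where the bulk of the computational work lies.
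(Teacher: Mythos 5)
The statement you were asked to prove is Theorem~\ref{theorem:them} itself, i.e.\ the tail bound $\mathsf{Pr}\left[ f(R) \leq \frac{\mathbb{E}[f(R)]}{5(q+1)} - \frac{(k+1)\nu}{q+1} \right] \leq \left(\frac{2}{q^k}\right)^{1/q}$ for monotone subadditive $f$ with singleton values at most $\nu$. In the paper this is an external result imported verbatim from \cite{dobzinski2024constant} and used as a black box; it is never reproved. Your proposal does not prove this inequality at all: Case~B of your plan explicitly \emph{invokes} Theorem~\ref{theorem:them}, so as an argument for the stated theorem it is circular, and nowhere do you give an independent derivation of the $(2/q^k)^{1/q}$ tail. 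What your sketch actually targets is Lemma~\ref{lemma:1} of the paper --- the claim that $\mathsf{Pr}[f(R) \geq p f(\hat{M})/120] > 1 - 1/\hat{n}$ under the hypothesis that sets of size at most $40(\log \hat{n}+1)/p$ have value at most $f(\hat{M})/2$ --- which is the lemma proved \emph{using} Theorem~\ref{theorem:them}, not the theorem itself. That is a wrong target, not merely a different route.

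Even read as an attempt at Lemma~\ref{lemma:1}, the plan is incomplete exactly where you flag it. The paper does not split $\hat{M}$ into heavy and light singletons; it replaces $f$ by a capped subadditive function $\bar{f}$ whose singleton values are at most $\nu = f(\hat{M}) / \frac{80(\log \hat{n}+1)}{p}$ --- i.e.\ of order $p f(\hat{M})/\log \hat{n}$, not $p f(\hat{M})/120$ --- and uses the small-set hypothesis to show $\bar{f}(\hat{M}) \geq f(\hat{M})/2$. Applying Theorem~\ref{theorem:them} to $\bar{f}$ with $k = 2\log \hat{n} + 1$ and $q = 2$ then keeps the correction term $\frac{(k+1)\nu}{q+1}$ of order $p f(\hat{M})/120$ precisely because $\nu$ scales like $1/\log \hat{n}$, while the failure probability is $2^{-\log\hat{n}} \leq 1/\hat{n}$; together with $\mathbb{E}[\bar f(R)] \geq p\bar f(\hat M)/2 \geq p f(\hat M)/4$ (Lemma~\ref{lemma:exp}) and $f \geq \bar f$ this closes the argument. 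Your single-threshold split with $v = p f(\hat{M})/120$ creates exactly the tension you describe, and the proposed ``logarithmically many bands'' refinement is left entirely unverified. So the proposal both misses its target (no proof of the quoted theorem) and, for the lemma it does address, leaves the decisive balancing step open where the paper's capping construction resolves it.
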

Also, the proof of \Cref{lemma:1} requires a bound on the expectation of randomly selected subsets of a subadditive function which we state here.

\begin{lemma}\label{lemma:exp}
	Let $f : 2^{\hat{M}} \to \mathbb{R}^+$ be a monotone subadditive function with non-negative valuations. If we select a subset $R \subseteq {\hat{M}}$ by independently putting each element into $R$ with probability $p$, we have $\mathbb{E}[f(R)] \geq pf({\hat{M}})/2$.
\end{lemma}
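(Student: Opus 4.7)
The plan is to prove Lemma~\ref{lemma:exp} by combining a random-partition argument with a simple coupling. Set $k = \lceil 1/p \rceil$, which satisfies $p/2 \leq 1/k \leq p$ (the upper bound is immediate; the lower bound uses $k \leq 1/p + 1 \leq 2/p$). Independently assign each element of $\hat{M}$ to one of $k$ bins, uniformly at random, obtaining a random partition $A_1,\ldots,A_k$ in which every element lies in any particular $A_i$ with probability $1/k$, independently across elements.

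Next, I would apply subadditivity pointwise: for every outcome of the partition, $\sum_{i=1}^{k} f(A_i) \geq f(A_1 \cup \cdots \cup A_k) = f(\hat{M})$. Taking expectations and using that the joint distribution of $(A_1,\ldots,A_k)$ is symmetric under relabeling bins, all $\mathbb{E}[f(A_i)]$ are equal, so $k\,\mathbb{E}[f(A_1)] \geq f(\hat{M})$ and hence $\mathbb{E}[f(A_1)] \geq f(\hat{M})/k$.

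Finally, I would couple $A_1$ with the Bernoulli-$p$ sample $R$. Since $1/k \leq p$, I can construct $R$ by including every element of $A_1$ and additionally including each element not in $A_1$ independently with probability $q := (p - 1/k)/(1 - 1/k) \in [0,1]$. A short check shows that in the coupled sample every element is independently present in $R$ with probability $\tfrac{1}{k} + (1-\tfrac{1}{k})q = p$, so $R$ has the correct distribution; at the same time $R \supseteq A_1$ deterministically. By monotonicity, $f(R) \geq f(A_1)$ pointwise, which yields
\[
\mathbb{E}[f(R)] \;\geq\; \mathbb{E}[f(A_1)] \;\geq\; \frac{f(\hat{M})}{k} \;\geq\; \frac{p\,f(\hat{M})}{2},
\]
completing the proof. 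There is no substantive obstacle here; the only points requiring mild care are verifying that the random partition indeed gives independent $1/k$-inclusions for $A_1$ and that the extra independent coins in the coupling preserve independence across elements, both of which follow directly from the construction.
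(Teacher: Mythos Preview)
Your proof is correct and follows essentially the same approach as the paper: set $k=\lceil 1/p\rceil$, establish $\mathbb{E}[f(A_1)]\geq f(\hat{M})/k$ for a Bernoulli-$1/k$ sample, and then couple up to probability $p$ via monotonicity. The only difference is that the paper cites Feige~\cite{feige2009maximizing} for the integer-$1/p$ bound, whereas you spell out that random-partition argument explicitly.
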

\begin{proof}
	It is shown in~\cite{feige2009maximizing} that when $1/p$ is an integer number, $\mathbb{E}[f(R)] \geq pf({\hat{M}})$ holds. Let $p' = 1/{\lceil 1/p \rceil}$ and $R' \subseteq {\hat{M}}$ be a randomly selected subset of ${\hat{M}}$ by independently putting  each element of ${\hat{M}}$ into $R'$ with probability $p'$. Since $p/2 \leq p' \leq p$ and $f$ is a monotone subadditive function with non-negative valuations, we have: 
	$$p f({\hat{M}})/2 \leq p' f({\hat{M}})  \leq  \mathbb{E}[f(R')] \leq  \mathbb{E}[f(R)].$$
\end{proof}

\begin{proof}[ of Lemma~\ref{lemma:1}]
Recall that a set function $g : 2^{\hat{M}} \to \mathbb{R}^+$ is subadditive if and only if for every subset $X \subseteq {\hat{M}}$ we have:
$$g(X) \leq \sum_{X' \in \langle X_1, X_2, \ldots, X_t \rangle} g(X')$$
for every partitioning of $X$ into non-empty and disjoint subsets $\langle X_1, X_2, \ldots, X_t \rangle$. Based on this, we introduce a bounded function $\bar{f}$ in the following way:

\begin{equation*}
\bar{f}(X) = \begin{cases}
	
	0, & \text{if } |X| = 0 \\
	
	\min\{f(X),f({\hat{M}})/ \frac{80 (\log \hat{n}+1)}{p} \}, & \text{if } |X| = 1 \\
	
	\min\{f(X),\min_{\langle X_1, X_2, \ldots, X_t \rangle \in \textsf{Partitions}(X)}\{\sum_{X' \in \langle X_1, X_2, \ldots, X_t \rangle} \bar{f}(X')\} \}, & \text{if } |X| > 1 \\
	
\end{cases}
\end{equation*}
where $\textsf{Partitions}(X)$ is the set of all partitionings of $X$ into non-empty and disjoint subsets excluding $X$ itself. It follows from the definition that $\bar{f}$ is also a monotone subadditive function and that $f(X) \geq \bar{f}(X)$ for any $X \subseteq {\hat{M}}$. We first show that $\bar{f}({\hat{M}}) \geq f({\hat{M}})/2$.

To this end, assume for the sake of contradiction that $\bar{f}({\hat{M}}) < f({\hat{M}})/2$. Let $\langle X_1, X_2, \ldots, X_t \rangle \in \textsf{Partitions}({\hat{M}})$ be a partitioning of ${\hat{M}}$ that minimizes the expression $\sum_{X' \in \langle X_1, X_2, \ldots, X_t \rangle} \bar{f}(X')$ and has the largest size. In other words $\langle X_1, X_2, \ldots, X_t \rangle \in \textsf{Partitions}({\hat{M}})$ is a partitioning of ${\hat{M}}$ into the maximum number of disjoint subsets such that $\bar{f}({\hat{M}}) = \sum_{X' \in \langle X_1, X_2, \ldots, X_t \rangle} \bar{f}(X')$. This means that for every $X' \in \langle X_1, X_2, \ldots, X_t \rangle$ we have either $|X'| = 1$ or $f(X') = \bar{f}(X')$. We put the partitions of $\langle X_1, X_2, \ldots, X_t \rangle$ into two lists $\mathcal{C} = C_1,C_2, \ldots, C_c$ and $\mathcal{D} = D_1,D_2,\ldots,D_d$ where list $\mathcal{C}$ contains all subsets of $X' \in \langle X_1, X_2, \ldots, X_t \rangle$ for which $f(X') = \bar{f}(X')$ and $\mathcal{D}$ contains the rest of the subsets (this implies that $c+d = t$). Based on this, we know that $|D_i| =  1$ for each $1 \leq i \leq d$ and that $\bar{f}(D_i) =  f({\hat{M}})/\frac{80 (\log \hat{n}+1)}{p}$ for every such subset. Since  we have 
$$f({\hat{M}})/2 > \bar{f}({\hat{M}}) = \sum_{X' \in \langle X_1, X_2, \ldots, X_t \rangle} \bar{f}(X') \geq \sum \bar{f}(D_i) = bf({\hat{M}})/ \frac{80 (\log \hat{n}+1)}{p},$$  it follows that the size of $\mathcal{D}$ is bounded by $\frac{40 (\log \hat{n}+1)}{p}$ (i.e., $d \leq \frac{40 (\log \hat{n}+1)}{p}$). Since $f(S) \leq f({\hat{M}})/2$ holds for any subset $S \subseteq {\hat{M}}$ such that $|S| \leq \frac{40 (\log \hat{n}+1)}{p}$, this implies that $f(D_1 \cup D_2 \cup \ldots \cup D_d) \leq f({\hat{M}})/2$. Moreover, since $\langle C_1, C_2, \ldots, C_c, D_1 \cup D_2 \cup \ldots \cup D_d\rangle$ is a partitioning of ${\hat{M}}$ and $f$ is subadditive, this implies that $\sum_{i=1}^c f(C_i) \geq f({\hat{M}})/2$ and therefore $\sum_{i=1}^c f(C_i) = \sum_{i=1}^c \bar{f}(C_i) \leq \sum_{X' \in \langle X_1, X_2, \ldots, X_t \rangle} \bar{f}(X') = \bar{f}({\hat{M}})$ cannot be smaller than $f({\hat{M}})/2$ which contradicts our assumption.

Notice that with our definition, we have (i) $\bar{f}({\hat{M}}) \geq f({\hat{M}})/2$, (ii) $\bar{f}(X) \leq f(X)$ for every $X \subseteq {\hat{M}}$ and (iii) $\bar{f}(\{\hat{\ite}_x\}) \leq f({\hat{M}}) / \frac{80 (\log \hat{n}+1)}{p}$ for every $\hat{\ite}_x \in {\hat{M}}$. This makes it a great candidate to be applied to Theorem~\ref{theorem:them} to prove our bound. To this end, we choose $k=2\log \hat{n} + 1$, $\nu = f({\hat{M}}) / \frac{80 (\log \hat{n}+1)}{p}$, and $q=2$. By creating $R$ as a random subset of ${\hat{M}}$ whose every element appears in $R$ with probability $p$ we have

	\[
\mathsf{Pr}\left[ \bar{f}(R) \leq \frac{\mathbb{E}[\bar{f}(R)]}{5(q + 1)} - \frac{(k + 1)\nu}{q + 1} \right] 
\leq \left( \frac{2}{q^k} \right)^{1/q}
\]

and therefore 

	\[
\mathsf{Pr}\left[ \bar{f}(R) \leq \frac{\mathbb{E}[\bar{f}(R)]}{5(2 + 1)} - \frac{(2\log \hat{n} + 1 + 1) f({\hat{M}}) / \frac{80 (\log \hat{n}+1)}{p} }{2 + 1} \right] 
\leq \left( \frac{2}{2^{2\log \hat{n} + 1}} \right)^{1/2}
\]

and thus

	\[
\mathsf{Pr}\left[ \bar{f}(R) \leq \frac{\mathbb{E}[\bar{f}(R)]}{15} - \frac{p f({\hat{M}})/40}{3} \right] 
\leq 1/\hat{n}.
\]

Also, it follows from Lemma~\ref{lemma:exp} that $\mathbb{E}[\bar{f}(R)] \geq p\bar{f}({\hat{M}})/2$ since every element of ${\hat{M}}$ appears in $R$ with probability $p$. Therefore we have $\mathbb{E}[\bar{f}(R)] \geq p\bar{f}({\hat{M}})/2 \geq  pf({\hat{M}})/4$. Also keep in mind that $f(R) \geq \bar{f}(R)$ and thus

	\[
\mathsf{Pr}\left[ f(R) \leq \frac{p f({\hat{M}})/4}{15} - \frac{p f({\hat{M}})/40}{3} \right] 
\leq 1/\hat{n}
\]
which implies
	$$\mathsf{Pr} \left[ f(R) > \frac{f({\hat{M}})p}{120} \right] \geq 1-1/\hat{n}.$$
Finally, we note that if we repeat the same analysis by setting $q = 2+\epsilon$ for small enough $\epsilon$, we  obtain 
	$$\mathsf{Pr} \left[ f(R) \geq \frac{f({\hat{M}})p}{120} \right] > 1-1/\hat{n}.$$
\end{proof}

\end{toappendix}

\section{Converting Multiallocation to Allocation}

In this section, we prove Lemma \ref{lemma:second}.

\begin{lemma}\label{lemma:second}
	Let $\tr{\valu}_1, \tr{\valu}_2, \ldots, \tr{\valu}_{\tr{n}}$ be $\tr{n}$ monotone subadditive functions with non-negative valuations defined on a ground set of elements $\tr{M} = \{\tr{\ite}_1, \tr{\ite}_2, \tr{\ite}_3, \ldots, \tr{\ite}_{|\tr{M}|}\}$. Let $A_1, A_2, \ldots, A_{\tr{n}} \subseteq \tr{M}$ be $\tr{n}$ subsets of $\tr{M}$ such that no element of $\tr{M}$ appears in more than $\alpha$ of these subsets. If $\tr{\valu}_{i}({\{\tr{\ite}_x\}}) \leq \beta$ for every $\tr{\ite}_x \in \tr{M}$ and $1 \leq i \leq \tr{n}$ then there exist $\tr{n}$ disjoint subsets $A'_1, A'_2, \ldots, A'_{\tr{n}} \subseteq \tr{M}$ such that:
	\begin{equation}\label[ineq]{eq:lem2}
		\tr{\valu}_i(A'_i) \geq \frac{\tr{\valu}_i(\alloc_i)}{480\alpha(\log (80\alpha) + \log (\log \tr{n}+1))}- \frac{3\beta}{2}.
	\end{equation}
\end{lemma}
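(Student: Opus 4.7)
The plan is to follow the two-category framework outlined just above the statement. The strategy is to partition agents into \emph{easily-satisfiable} (ES) ones---those for which some $X_i\subseteq A_i$ with $|X_i|\le 80\alpha(\log\hat n+1)$ captures at least half of $\hat v_i(A_i)$---and the remaining \emph{not-easily-satisfiable} (NES) ones, handle the two categories by independent mechanisms that together form a $2$-multiallocation, and then de-conflict the $2$-multiallocation by a correlated randomized procedure.

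For the NES agents I would use a uniform-at-random assignment: for every item $\hat b\in\hat M$, pick uniformly at random one of the at most $\alpha$ NES agents who hold $\hat b$ in the input multiallocation and give $\hat b$ to her. The subset $A^{\mathrm{N}}_i\subseteq A_i$ received by each NES agent~$i$ is then random, with every element of $A_i$ included independently with probability at least $1/\alpha$. The NES condition is exactly the hypothesis of Lemma~\ref{lemma:1} with $p=1/\alpha$, so that lemma gives $\hat v_i(A^{\mathrm{N}}_i)\ge \hat v_i(A_i)/(120\alpha)$ with probability $>1-1/\hat n$; a union bound over the NES agents leaves positive probability that all of them are simultaneously satisfied.

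For the ES agents I would first construct an additive under-estimator $\tilde v_i$ of $\hat v_i$ on $X_i$ satisfying $\tilde v_i(Y)\le\hat v_i(Y)$ for every $Y\subseteq X_i$ and $\tilde v_i(X_i)=\Omega(\hat v_i(X_i)/(\log\alpha+\log\log\hat n))$, via a Feige-style geometric bucketing into value classes---since $|X_i|=O(\alpha\log\hat n)$ the number of non-negligible classes is $O(\log\alpha+\log\log\hat n)$, so the richest class carries an $\Omega(1/\log|X_i|)$-fraction of the mass and can be replaced by a uniform linear weighting. Sort $X_i$ in decreasing $\tilde v_i$-order, truncate to a multiple of $\alpha$ (losing at most $(\alpha-1)\beta$), and group the remaining items into blocks of $\alpha$ consecutive items. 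Build the bipartite graph on items of $\hat M$ versus all blocks across all ES agents, with an edge when an item lies in a block. Each block has degree exactly $\alpha$ while each item---in at most $\alpha$ input bundles---touches at most $\alpha$ blocks, so counting block--item incidences yields Hall's condition $|N(S)|\ge|S|$ on every set $S$ of blocks. Hall's theorem then provides a matching saturating every block, and assigning each ES agent her matched item per block gives her a bundle $M_i$ with $\tilde v_i(M_i)\ge\tilde v_i(X_i)/\alpha$ up to truncation.

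Overlaying the two allocations produces a $2$-multiallocation; I would de-conflict it by pairing the items of each $M_i$ consecutively in $\tilde v_i$-sorted order (padding with a zero-value dummy if the length is odd) and running the coin-flip / NES-chooses procedure from the paper's sketch. A short case analysis over doubly-held pairs shows that the ES agent keeps, in expectation, at least $\sum_k \tilde v_i(p_{2k})\ge\tfrac12(\tilde v_i(M_i)-\beta)$---using $p_{2k}\ge p_{2k+1}$ in the sorted order---while each NES agent's final bundle is only a further per-item $1/2$-thinning of $A^{\mathrm{N}}_i$, so Lemma~\ref{lemma:1} reapplied with $p=1/(2\alpha)$ (whose hypothesis is still guaranteed by the NES definition) recovers the same form of bound. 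The main obstacle I anticipate is the joint analysis: the coin-flip and NES-chooses stages correlate ES and NES outcomes, so one must argue that the NES concentration event, the ES pairing-expectation event (via a Markov bound on the deficit), and the deterministic Hall matching all hold simultaneously with positive probability on a common probability space. Once this is in place, tracking the three multiplicative losses---the $120\alpha$ from NES concentration, the $O(\log\alpha+\log\log\hat n)$ from linear rounding applied to $\hat v_i(X_i)\ge\hat v_i(A_i)/2$, and the $1/2$ from pairing---together with the truncation, pairing-endpoint, and dummy-padding $\beta$-losses yields exactly the bound in the statement.
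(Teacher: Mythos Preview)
Your proposal tracks the paper's argument closely: the $\mathcal E/\mathcal H$ split, the Hall matching on size-$\alpha$ blocks for $\mathcal E$, the uniform random assignment for $\mathcal H$, and the pair-and-flip merge are precisely the paper's steps (the paper obtains the additive under-estimator via an LP of Ghodsi \emph{et al.}\ rather than bucketing, but the resulting $O(\log|X_i|)$ loss is identical). Two places where your sketch diverges from the paper's execution deserve correction.

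First, the ES bound after merging is \emph{deterministic}, not ``in expectation''. In every branch of the merge (both items unique; one unique and one shared; both shared with distinct NES owners; both shared with the same NES owner) the ES agent retains at least one item from each pair. Hence $\hat v_i(A'_i)\ge\sum_k\tilde v_i(p_{2k})\ge\tfrac12\tilde v_i(M_i)-\beta$ holds for every outcome of the coins. This dissolves the ``main obstacle'' you anticipate: there is no ES-side random event to intersect with the NES concentration event, so a plain union bound over the at most $\hat n$ NES agents (each failing with probability $<1/\hat n$ by Lemma~\ref{lemma:1}) already leaves positive probability.

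Second, on the NES side the final bundle is \emph{not} a further independent $1/2$-thinning of $A^{\mathrm N}_i$: the linked-pair step (both items of a pair owned by the same NES agent in $\mathcal A^{\mathcal H}$) is a deterministic choice by that agent, so Lemma~\ref{lemma:1} cannot be applied directly to $A'_i$. The paper instead applies the concentration bound to the intermediate bundle $A''_i$ obtained \emph{before} linked-pair removal---there each element of $A_i$ does appear with probability $\ge 1/(2\alpha)$---and then handles the removal by subadditivity: since the agent chooses the removed set $\bar L\subseteq L$ to maximize $\hat v_i(A''_i\setminus\bar L)$ and $L\setminus\bar L$ is another feasible removal, $\hat v_i(A'_i)\ge\hat v_i\bigl(A''_i\setminus(L\setminus\bar L)\bigr)$, whence $2\hat v_i(A'_i)\ge\hat v_i(A''_i)$. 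With these two fixes your outline becomes the paper's proof.
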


Although \Cref{lemma:second} is an independent mathematical result, we present its proof within the context of allocation for brevity and relevance. Specifically, we consider a set  $\tr{\items} = \{\tr{\ite}_1, \tr{\ite}_2, \tr{\ite}_3, \ldots, \tr{\ite}_{|\tr{M}|}\}$ of items, and $\tr{n}$ agents $\tr{\agents} = \{\tr{\agent}_1,\tr{\agent}_2,\ldots, \tr{\agent}_{\tr{n}}\}$ with valuations $\tr{\valu}_1, \tr{\valu}_2, \dots, \tr{\valu}_{\tr{n}}$. $\mathcal{A} = \langle \alloc_1,\alloc_2 \dots, \alloc_{\tr{n}}\rangle$ represents an $\alpha$-multiallocation of $\tr{\items}$. Additionally, we assume that the value of each item for any agent is bounded by $\beta$.

We prove the existence of an allocation satisfying \Cref{eq:lem2} by showing that \Cref{alg:lem2} returns such an allocation. The algorithm begins by dividing the agents into two categories: easily-satisfiable agents ($\easily$) and not-easily-satisfiable agents ($\noteasily$). An agent $\tr{\agent}_i$ is easily-satisfiable if their bundle $A_i$ contains a subset $\subsett_i$ of size at most $80\alpha(\log {\tr n}+1)$ such that $\tr{\valu}_i(\subsett_i) \geq \tr{\valu}_i(A_i)/2$; otherwise, they belong to $\noteasily$. We then produce an intermediary multiallocation based on the bundles of agents in $\easily$ and $\noteasily$ such that the bundles of agents within each category are guaranteed to be disjoint. However, agents from different categories may still share common items. Finally, we run a procedure to produce an allocation $\mathcal{A}'$ based on the intermediary multiallocation with completely disjoint bundles. \Cref{sec:easy,sec:noteasy} describe the construction of the intermediary allocation for easily and not-easily satisfiable agents, respectively.

\begin{algorithm}[t]
	\caption{Multiallocation to Allocation}\label{alg:lem2}
	\LinesNumbered
	\KwIn{Sets $A_1, A_2, \ldots, A_{\tr{n}}$,  $\tr{\valu}_1,\tr{\valu}_2,\ldots, \tr{\valu}_\tr{n}$}
	\KwOut{$A'_1, A'_2,\ldots, A'_{\tr{n}}$}
	$\easily = \{\tr{\agent}_i | \exists X_i \subset A_i \mbox{ such that } |X_i| \leq80\alpha(\log {\tr n} +1) \mbox{ and } \tr{\valu}_i(X_i)\geq\tr{\valu}_i(A_i)/2 \}$\;
	$\noteasily = \{\tr{\agent}_i| \tr{\agent}_i \notin \easily\}$\;
	$\mathcal{\alloc}^\easily$ = \textsc{ResolveEasyIntersections}($\easily$)\;
	$\mathcal{\alloc}^\noteasily$ = \textsc{ResolveHardIntersections}($\noteasily$)\;
	$\mathcal{\alloc}' =  \textsc{Merge}(\mathcal{\alloc}^\easily,\mathcal{\alloc}^\noteasily$)\;
	\ForEach{  $\tr{\agent}_i \in \noteasily$}{
		\If{$\tr{\valu}_i(\alloc'_i) < {\tr{\valu}_i(\alloc_i)}/({480\alpha})$}{Go to Line 4\;}
	}
	Return $\mathcal{\alloc}'$
\end{algorithm}

\subsection{Easily-satisfiable Agents} \label{sec:easy}

The agents in $\easily$ are indeed easier to deal with since their bundles contain a relatively small subset with a high value. For this case, we leverage an idea similar to that of \textcite{ghodsi2018fair} for approximating a subadditive valuation with an additive valuation. Next, we use matching in bipartite graphs to resolve intersections.  For every agent $\tr{\agent_i} \in \easily$, we denote by $A^{\easily}_i$  the subset returned by the algorithm in this step.

Let $\tr{\agent}_i \in \easily$ be an easily-satisfiable agent. By definition, there exists a subset $X_i \subseteq A_i$ such that $\tr{\valu}_i(X_i)\geq\tr{\valu}_i(A_i)/2$ and $|X_i|\leq80 \alpha (\log \tr{n} +1)$. Now, consider the following linear program:

\begin{equation}
	\begin{aligned}
		\text{Maximize} \quad &\sum_{\tr{\ite}_x \in X_i} \widetilde{\valu}_i(\{\tr{\ite}_x\}) \\
		\text{Subject to} \quad & \sum_{\tr{\ite}_x  \in Y} \widetilde{\valu}_i(\{\tr{\ite}_x \}) \leq \tr{\valu}_i(Y) \quad \forall Y \subseteq X_i \\
		& \widetilde{\valu}_i(\{\tr{b}_x \}) \geq 0 \quad \forall \tr{b}_x  \in X_i\\
	\end{aligned}
	\label[LP]{eq:lp}
\end{equation}

In essence, \Cref{eq:lp} seeks to find an additive valuation function $\widetilde{\valu}_i: 2^{\subsett_i} \to \mathbb{R}^+$ that, for every subset of $\subsett_i$, provides a lower-bound approximation of $\tr{\valu}_i$ while maximizing $\widetilde{\valu}_i(X_i)$. As shown by \cite{ghodsi2018fair}, we have $ \widetilde{\valu}_i(X_i)/\tr{\valu}_i(X_i) \geq 1/(3{\log |X_i|})$. Given that $|X_i| \leq 80 \alpha (\log \tr{n} +1)$, it follows that 
\begin{align}
	\frac{\widetilde{\valu}_i(X_i)}{\tr{\valu}_i(X_i)} 	&\geq \frac{1}{3{\log(80 \alpha (\log \tr{n} +1))}} \nonumber \geq \frac{1}{3(\log (80\alpha) + \log (\log \tr{n}+1))}. \numberthis\label[ineq]{delta}
\end{align}
In the rest of the section, for each agent $\tr{\agent}_i \in \easily$, we refer to $\widetilde{\valu}_i$ as the auxiliary valuation of agent $\tr{\agent}_i$. 
We now use bipartite graph matching to construct allocation $\mathcal{A}^\easily$ for the agents in $\easily$. Consider a bipartite graph $G^\easily$ with two parts:  

\begin{enumerate}
	\item The first part includes a node for every item allocated to an agent in $\easily$.
	\item The second part is constructed as follows:  
	For each agent $\tr{\agent}_i \in \easily$, assume the items in $X_i$ are sorted in descending order of their values in $\widetilde{\valu}_i$ as $\tr{b}_{x_1}, \tr{b}_{x_2}, \ldots, \tr{b}_{x_{|X_i|}}$. We group these items into blocks of size $\alpha$ as follows:  
	\begin{align*}
		\block_{i,1} &= [\tr{b}_{x_1}, \tr{b}_{x_2}, \ldots, \tr{b}_{x_\alpha}],  
		\block_{i,2} = [\tr{b}_{x_{\alpha+1}}, \tr{b}_{x_{\alpha+2}}, \ldots, \tr{b}_{x_{2\alpha}}],  
		\ldots  
		,\block_{i,\lfloor |X_i|/\alpha \rfloor} = [\tr{b}_{x_{(\lfloor |X_i|/\alpha \rfloor-1)\alpha+1}}, \ldots, \tr{b}_{x_{\lfloor |X_i|/\alpha \rfloor\alpha}}].
	\end{align*}
	If $|X_i|$ is not divisible by $\alpha$, we ignore items $\tr{b}_{x_{\lfloor |X_i|/\alpha \rfloor\alpha+1}}$ to $\tr{b}_{x_{|X_i|}}$.  For each block created from $X_i$, we add a vertex to the second part of $G^\easily$. Note that, by the way we construct the blocks, for every agent $\tr{\agent}_i \in \easily$ and every $1 \leq j \leq \lfloor |X_i|/\alpha \rfloor-1$ we have
	\begin{equation}\label[ineq]{nblock}
		\forall {\tr{b}_y \in \block_{i,j}, \tr{b}_z \in \block_{i,j+1}} \qquad \widetilde{\valu}_i(\{\tr{b}_y\}) \geq \widetilde{\valu}_i(\{\tr{b}_z\}).
	\end{equation}
\end{enumerate}

We add an edge between the vertex representing an item $\tr{b}_y$ in the first part and the vertex representing a block in the second part if $\tr{b}_y$ appears in that block. Consequently, each vertex corresponding to a block in $G^\easily$ has degree $\alpha$. Since each item appears in at most $\alpha$ bundles, the degree of each vertex in the first part is also at most $\alpha$.

By \emph{Hall}'s theorem \cite{hall1987representatives}, $G^\easily$ has a matching $\matching$ that covers all nodes of the second part. Now, for every $\tr{\agent}_i \in \easily$, we define the bundle of  $\tr{\agent}_i$ in $\mathcal{A}^\easily$  as  
$
A^\easily_i = \{\matching(\block_{i,1}), \matching(\block_{i,2}), \ldots, \matching(\block_{i,\lfloor k/\alpha \rfloor})\},
$  
where $\matching(\block_{i,j})$ is the item corresponding to the vertex matched with the vertex of $\block_{i,j}$ in $\matching$. In \Cref{lem:easily}, we establish a lower bound on the value of the bundles in $ \mathcal{\alloc}^\easily $.

\begin{lemmarep}\label{lem:easily}
	For every agent $\tr{\agent}_i \in \easily$ we have  
	$
	{\widetilde\valu}_i({A}^{\easily}_i) \geq \frac{\tr{\valu}_i(A_i)}{6\alpha(\log (80\alpha) + \log \log (\tr{n}+1))} - \beta.
	$
\end{lemmarep}

\begin{appendixproof}
	Assume $|{X}_i| = k$. By \Cref{nblock}, we know that for every $1 \leq j \leq \lfloor k/\alpha \rfloor-1$ and every $\tr{\ite}_x \in \block_{i,j+1}$, we have $\widetilde{\valu}_i(\matching(\block_{i,j})) \geq \widetilde{\valu}_i(\{\tr{\ite}_x\})$. Averaging over all items in $\block_{i,j+1}$, we obtain:
	\begin{equation}\label[ineq]{blockav}
		\widetilde{\valu}_i(\matching(\block_{i,j})) \geq \frac{1}{\alpha} \sum_{\tr{\ite}_x \in \block_{i,j+1}} \widetilde{\valu}_i(\{\tr{\ite}_x\}).
	\end{equation}
	Also, recall that some items in ${X}_i$ may not have been grouped since $k$ may not be divisible by $\alpha$. For brevity, we group these items into block $\block_{i,\lfloor k/\alpha \rfloor+1}$. Unlike the other blocks, the size of this block is $k \textsf{ mod } \alpha$. However, since these items have the smallest values in $\widetilde{\valu}_i$, we have  
	\begin{equation}\label[ineq]{blockav2}
		\widetilde{\valu}_i(\matching(\block_{i,\lfloor k/\alpha \rfloor})) \geq \frac{1}{k \textsf{ mod } \alpha} \sum_{\tr{\ite}_x \in \block_{i,\lfloor k/\alpha \rfloor+1}} \widetilde{\valu}_i(\{\tr{\ite}_x\}) \geq \frac{1}{\alpha} \sum_{\tr{\ite}_x \in \block_{i,\lfloor k/\alpha \rfloor+1}} \widetilde{\valu}_i(\{\tr{\ite}_x\}).	
	\end{equation}
	
	Now, we have 
	\begin{align*}
		\widetilde{\valu}_i({A}^{\easily}_i)
		& = \sum_{1\leq j \leq \lfloor k/\alpha\rfloor} \widetilde{\valu}_i(\matching(\block_{i,j}))  &\mbox{$\widetilde{\valu}_i$ is additive}\\
		&\geq \sum_{1\leq j \leq \lfloor k/\alpha\rfloor-1} \frac{1}{\alpha} \sum_{\tr{\ite}_x \in \block_{i,j+1}} \widetilde{\valu}_i(\{\tr{\ite}_x\}) + \frac{1}{\alpha} \sum_{\tr{\ite}_x \in \block_{i,\lfloor k/\alpha \rfloor+1}} \widetilde{\valu}_i(\{\tr{\ite}_x\})  &\textit{\Cref{blockav,blockav2}}\\
		&= \frac{1}{\alpha}  \sum_{2\leq j \leq \lfloor k/\alpha\rfloor} \sum_{\tr{\ite}_x \in \block_{i,j}} \widetilde{\valu}_i(\{\tr{\ite}_x\}) \\
		&\geq \frac{1}{\alpha} \left(\sum_{\tr{\ite}_x \in \tr{X}_i} \widetilde{\valu}_i(\{\tr{\ite}_x\}) - \sum_{\tr{\ite}_x \in \block_{i,1}} \widetilde{\valu}_i(\{\tr{\ite}_x\})\right) \\ 
		&= \frac{1}{\alpha} \widetilde{\valu}_i({X}_i) - \frac{1}{\alpha}  \sum_{\tr{\ite}_x \in \block_{i,1}} \widetilde{\valu}_i(\{\tr{\ite}_x\})\\
		&\geq \frac{1}{\alpha} \widetilde{\valu}_i({X}_i) - \beta. \numberthis \label[ineq]{ineqf}
	\end{align*}
	The last line follows from the fact that the value of each item for each agent is bounded by $\beta$. On the other hand, we have
	\begin{align*}
		\tr{\valu}_i(A_i) 		&\leq 2\tr{\valu}_i({X}_i) &\tr{\agent}_i \in \easily\\
		& \leq 6\widetilde{\valu}_i({X}_i)(\log (80\alpha) + \log \log (\tr{n}+1))\numberthis\label[ineq]{ineqs} &\text{Inequality \eqref{delta}}.
	\end{align*}
	Combining \Cref{ineqf,ineqs} implies: 
	$$
	{\widetilde\valu}_i({A}^{\easily}_i) \geq \frac{\tr{\valu}_i(A_i)}{6\alpha(\log (80\alpha) + \log \log (\tr{n}+1))} - \beta.
	$$
\end{appendixproof}

\Cref{alg:allocate_easy} shows a pseudocode of the process described in this section. Note that, by the construction of bundles in \( \mathcal{A}^\easily \), each item is allocated to at most one agent in \( \easily \).

\begin{algorithm}[t]
	\caption{\textsc{ResolveEasyIntersections}}
	\LinesNumbered
	\label{alg:allocate_easy}
	
	\KwIn{Set of easily-satisfiable agents $\easily$} 
	\KwOut{Allocation $\mathcal{A}^{\easily}$ } 
	
	\ForEach { $\tr{\agent}_i \in \easily$}{ 
		Solve \Cref{eq:lp} to obtain $\widetilde{{\valu}}_i$\;
	}
	Construct $G^\easily$ and find a maximum matching $\matching$ in $G^\easily$\;
	\ForEach {$\tr{\agent}_i \in \easily$}{
		$A^{\easily}_i = \{ \matching(\block_{i,1}), \matching(\block_{i,2}), \dots, \matching(\block_{i,\lfloor k/\alpha \rfloor}) \}$  \;
	}
	\textbf{Return} $\mathcal{A}^{\easily}$.
	
\end{algorithm}

\subsection{Not-easily-satisfiable Agents}\label{sec:noteasy}
Recall that for each agent $\tr{\agent}_i \in \noteasily$, and for each $X_i \subseteq \alloc_i$ such that $\tr{\valu}_i(X_i) \geq \tr{\valu}_i(A_i)/2$, we have $|X_i| > 80\alpha(\log \tr{n} + 1)$. Now, we construct an allocation $\mathcal{A}^\noteasily$ based on multiallocation $\mathcal{A}$ for not-easily-satisfiable agents. We determine the bundles of these agents in $\mathcal{A}^\noteasily$ using a simple probabilistic process: for each item belonging to multiple agents of $\noteasily$ in $\mathcal{A}$, we randomly select one of those agents to have it in $\mathcal{A}^\noteasily$.

\Cref{alg:noteasily} provides the pseudocode for constructing the allocation $\mathcal{A}^\noteasily$. Let $\alloc^{\noteasily}_i$ denote the bundle assigned to agent $\tr{\agent}_i \in \noteasily$ in $\mathcal{A}^\noteasily$. By construction, each item in $\mathcal{A}^\noteasily$ is allocated to at most one agent in $\noteasily$. 
For reasons that will become clear in the next section, we defer proving any lower bound on $\tr{\valu}_i(\alloc^\noteasily_i)$ at this stage and instead provide a unified analysis for not-easily-satisfiable agents after the merging process.

\begin{algorithm}[t]
	\caption{\textsc{ResolveHardIntersections}}
	\label{alg:noteasily}
	\LinesNumbered
	\KwIn{Set of agents $\noteasily$, allocation $\mathcal{A}$}
	\KwOut{Allocation $\mathcal{A}^\noteasily$}
	Initialize $\mathcal{A}$ as empty allocation\;
	\ForEach{ $\tr{b}_x$}{
		Let $S_{\tr{b}_x} = \{ \tr{\agent}_i \mid \tr{b}_x \in \alloc_i, \tr{\agent}_i \in \noteasily \}$ \;
		Select \( \tr{\agent}_j \) uniformly at random from \( S_{\tr{b}_x} \)\;
		$\alloc^{\noteasily}_j \leftarrow \alloc^{\noteasily}_j \cup \{\tr{\ite}_x\}$ \;
	}
	\Return{$\mathcal{A}^\noteasily$}
\end{algorithm}

\subsection{Merging $\mathcal{\alloc}^\easily$ and $\mathcal{\alloc}^\noteasily$ }

Up to this point, the bundles of $ \mathcal{\alloc}^\easily $ and $ \mathcal{\alloc}^\noteasily $ are guaranteed to be disjoint within their respective categories. Specifically, for every $ \tr{\agent}_i \neq \tr{\agent}_j \in \easily $, we have $ {\alloc}^{\easily}_i \cap \alloc^{\easily}_j = \emptyset $. Similarly, for every $ \tr{\agent}_i \neq \tr{\agent}_j \in \noteasily $, we have $ {\alloc}^{\noteasily}_i \cap \alloc^{\noteasily}_j = \emptyset$. However, an item may belong to one agent in $ \easily $ and another in $ \noteasily $. We refer to such an item as shared and to any item belonging to only one agent in $\mathcal{\alloc}^\easily \cup \mathcal{\alloc}^\noteasily$ as unique.

We construct allocation $\mathcal{\alloc}'$ as follows: each unique item in $\mathcal{\alloc}^\easily \cup \mathcal{\alloc}^\noteasily$ is allocated to the same agent in $\mathcal{\alloc}'$. In other words, if an item $\tr{\ite}_j$ is allocated only to agent $\tr{\agent}_i$ in $\mathcal{\alloc}^\easily \cup \mathcal{\alloc}^\noteasily$, it is also allocated to $\tr{\agent}_i$ in $\mathcal{\alloc}'$.

The challenge is to resolve the issue of shared items. For this, we partition the bundles of the agents of $\easily$ in $\mathcal{\alloc}^\easily$  into blocks of size 2. Let $\tr{\agent}_i$ be an agent in $ \easily$. We sort the items in \( \mathcal{\alloc}^\easily \) in descending order of their values under the auxiliary valuation function \( \widetilde{\valu}_i \). Then, we group them into blocks, where each block consists of two consecutive items from this sorted order. Specifically, the two most valuable items form the first block, the next two form the second block, and so on, resulting in blocks \( \block_{i,1}, \block_{i,2}, \dots, \block_{i,k/2} \).
To make sure that all the items are grouped, if $k$ is not divisible by 2, we introduce a dummy item   with  value $0$ for all agents to $\mathcal{\alloc}^\easily$.

Now, we have three types of blocks: (1) blocks where both items are unique, (2) blocks where one item is unique and the other one is shared, and (3) blocks where both items are shared. The first type requires no action since both items are already allocated in $\mathcal{\alloc}'$. For the second type, we allocate the shared item to the not-easily-satisfiable agent who owns it in $\mathcal{A}^\noteasily$.

The only remaining case is for items in blocks of the third type. Let $\block_{i,j}$ be such a block, containing items $\tr{\ite}_y$ and $\tr{\ite}_z$. We handle this by considering two cases:

\paragraph{Case 1. $\tr{\ite}_y$ and $\tr{\ite}_z$ belong to distinct agents in $\mathcal{\alloc}^\noteasily$:}  
In this case, we randomly select either $\tr{\ite}_y$ or $\tr{\ite}_z$ and allocate it to $\tr{\agent}_i$ in $\mathcal{\alloc}'$. The remaining shared item in the block is then allocated to the not-easily-satisfiable agent who owns it in $\mathcal{\alloc}^\noteasily$.

\paragraph{Case 2. $\tr{\ite}_y$ and $\tr{\ite}_z$ belong to the same agent in $\mathcal{\alloc}^\noteasily$:}  We resolve these conflicts by looking at the bundles of agents in $\noteasily$. Suppose all other conflicts have been resolved, except for the current case. For each agent $\tr{\agent}_i \in \noteasily$, we examine the blocks where both items are shared by $\tr{\agent}_i$. We ask agent \( \tr{\agent}_i \) to select one item from each block to maximize her total value when combined with her existing bundle in \( \alloc'_i \).
We then allocate these selected items to $\alloc'_i$ and allocate the remaining items to the easily-satisfiable agents who own them in $\mathcal{\alloc}^\easily$. 

Once the above process ends, we are guaranteed that all bundles of $\mathcal{A}'$ are disjoint. \Cref{alg:construct_allocation} shows a pseudocode of our algorithm for merging $\mathcal{\alloc}^\easily$ and  $\mathcal{\alloc}^\noteasily$.

In \Cref{lem:31,lem:32} we prove that with a non-zero probability, the final allocation has the desired approximation guarantee. 

\begin{lemmarep}\label{lem:31}
	For every agent $\tr{\agent}_i \in {\easily}$, we have $\tr{\valu}_i(\alloc'_i) \geq \widetilde{{\valu}}_i(\alloc^\easily_i)/2-\beta$.
\end{lemmarep}	
\begin{appendixproof}
	The idea to prove this lemma is similar to \Cref{lem:easily}. Recall that in the merging step, $\alloc^\easily_i$ is partitioned into blocks of size 2, and at least one item from each block appears in $\alloc'_i$. Let the blocks of $\alloc^\easily_i$ be denoted as $\block_{i,j} = \{\tr{\ite}_{x_j}, \tr{\ite}_{x'_j}\}$ for every $1 \leq j \leq |\alloc^\easily_i|/2$. 
	Since the blocks are constructed in decreasing order of their auxiliary values for agent $\tr{\agent}_i$, we have:  
	\begin{equation}\label[ineq]{blockshift}
		\min\big(\widetilde{\valu}_i(\{\tr{\ite}_{x_j}\}), \widetilde{\valu}_i(\{\tr{\ite}_{x'_j}\})\big) \geq \max\big(\widetilde{\valu}_i(\{\tr{\ite}_{x_{j+1}}\}), \widetilde{\valu}_i(\{\tr{\ite}_{x'_{j+1}}\})\big).
	\end{equation}
	Now, we have
	\begin{align*}
		\tr{\valu}_i(\alloc'_i) 		&\geq \widetilde{\valu}_i(\alloc'_i) \\
		&\geq \sum_{\tr{\ite}_y \in \alloc'_i} \widetilde{\valu}_i(\{\tr{\ite}_y\}) & \mbox{$\widetilde{\valu}_i$ is additive}\\
		&\geq \sum_{1 \leq j \leq |\alloc^\easily_i|/2} \min(\widetilde{\valu}_i(\{\tr{\ite}_{x_j}\}), \widetilde{\valu}_i(\{\tr{\ite}_{x'_j}\}))\\
		&\geq \sum_{2 \leq j \leq |\alloc^\easily_i|/2} \max(\widetilde{\valu}_i(\{\tr{\ite}_{x_j}\}),\widetilde{\valu}_i(\{\tr{\ite}_{x'_j}\})) &\text{\Cref{blockshift}}\\
		&\geq  \frac{1}{2}\sum_{2 \leq j \leq |\alloc_i^\easily|/2} \widetilde{\valu}_i(\{\tr{\ite}_{x_j}\})+\widetilde{\valu}_i(\{\tr{\ite}_{x'_j}\})\\
		& = \frac{1}{2} \widetilde{\valu}_i(\alloc^\easily_i) - \frac{1}{2} (\widetilde{\valu}_i(\{\tr{\ite}_{x_1}\}) +\widetilde{\valu}_i(\{\tr{\ite}_{x'_1}\}))\\
		& \geq \frac{1}{2} \widetilde{\valu}_i(\alloc^\easily_i)- \beta.
	\end{align*}

\end{appendixproof}

\begin{lemmarep}\label{lem:32} With the probability more than $1-1/n$, for every agent \( \tr{\agent}_i \in \noteasily \),  
	$	\tr{\valu}_i(\alloc'_i) \geq {\tr{\valu}_i(\alloc_i)}/{480}.
	$
\end{lemmarep}
\begin{appendixproof}
	Fix an agent $\tr{\agent}_i \in \noteasily$. We say two shared items are \textit{linked} if they belong to $\tr{\agent}_i$ and are also shared with the same agent in $\easily$ under allocation \( \alloc^\easily \). Let \( L \) be the set of such linked items in \( \alloc^\noteasily_i \).  
	To simplify the proof of \Cref{lem:32}, we first provide an intuitive interpretation of the merging step for \( \tr{\agent}_i \), which can be thought of as following steps:  
	\renewcommand{\labelenumi}{\arabic{enumi}.}
	\begin{enumerate}
		\item Each unique item in \( \alloc^\noteasily_i \) is added to \( \alloc'_i \) with probability 1.  
		\item Each shared item in \( \alloc^\noteasily_i \setminus L \) is added to \( \alloc'_i \) with probability \( 1/2 \).  
		\item All items in \( L \) are added to \( \alloc'_i \) with probability 1.  
		\item Finally, \(\tr{\agent}_i\) selects a feasible subset of items from \( L \) to remove from her current bundle \(\alloc'_i\). Specifically, she removes one item from each pair of linked items in \( L \) and reallocates it to the easily-satisfiable agent who owns it in \(\mathcal{\alloc}^\easily\). She makes this selection to maximize the value of her remaining bundle.
	\end{enumerate}  
	
	Let us now denote the bundle of agent \( \tr{\agent}_i \) in \( \mathcal{\alloc'} \) just before step 4 as \( \alloc''_i \). Based on the first three steps, we can conclude that each item in \( \alloc^\noteasily_i \) appears in \( \alloc''_i \) with probability at least \( 1/2 \). Moreover, since each item in \( \alloc_i \) appears in \( \alloc^\noteasily_i \) with probability at least \( 1/\alpha \), we can say that each item in \( \alloc_i \) appears in \( \alloc''_i \) with probability at least \( 1/(2\alpha) \). Additionally, since \( \tr{\agent}_i \) is a not-easily-satisfiable agent, by definition, for every subset \( X_i \subseteq \alloc_i \) with \( |X_i| \leq 80\alpha(\log \tr{n} + 1) \), we have \( \tr{\valu}_i(X_i) < \tr{\valu}_i(\alloc_i)/2 \). Applying \Cref{lemma:1} implies that with probability more than $1-1/n$ we have 
	$	\tr{\valu}_i(\alloc''_i) \geq {\tr{\valu}_i(\alloc_i)}/({240\alpha}).
	$
	Now, consider Step 4 and let $\bar L$ denote the set of items removed by agent $\tr{\agent}_{i}$, i.e., $\alloc'_i = \alloc''_i \setminus \bar L$. Since $\tr{\agent}_{i}$ removes items to maximize her value for the remaining items, $\bar L$ is the optimal feasible subset of items to remove compared to any other feasible subset, including $L \setminus \bar L$. This  means:
	\begin{equation}\label[ineq]{ineq:final}
		\tr{\valu}_i\left(\alloc'_i\right) = \tr{\valu}_i\left(\alloc''_i \setminus \bar L\right) \geq \tr{\valu}_i\left(\alloc''_i \setminus \left(L \setminus \bar L\right)\right).
	\end{equation}
	By the subadditivity assumption, this means \begin{equation}\label[ineq]{ineq:final2}
		\tr{\valu}_i(\alloc'_i) \geq \tr{\valu}_i(\alloc_i'')/2.\end{equation}
	Combining \Cref{ineq:final,ineq:final2} implies that with probability at least $1-1/n$ we have
	$
	\tr{\valu}_i(\alloc'_i) \geq {\tr{\valu}_i(\alloc_i)}/({480\alpha}).
	$
\end{appendixproof}

Finally, we can prove \Cref{lemma:second} by combining \Cref{lem:easily,,lem:31,,lem:32}.

\begin{proof}[ of \Cref{lemma:second}] Consider allocation $\mathcal{A}'$. 
	For every $\tr{\agent}_i \in \easily$, by  \Cref{lem:easily}  we have $
	\widetilde{{\valu}}_i(\alloc^\easily_i) \geq \frac{\tr{\valu}_i(\alloc_i)}{6\alpha(\log (80\alpha) + \log \log (\tr{n}+1))} - \beta,
	$ and by  \Cref{lem:31} we have      $\tr{\valu}_i(\alloc'_i) \geq \widetilde{{\valu}}_i(\alloc^\easily_i)/2-\beta$. Hence,
	\begin{align}
		\tr{\valu}_i(\alloc'_i) \geq \frac{\tr{\valu}_i(\alloc_i)}{12\alpha(\log (80\alpha) + \log \log (\tr{n}+1))}- \frac{3\beta}{2}. \label[ineq]{easilybound}
	\end{align}
	Furthermore, for every agent $\tr{\agent}_i \in \noteasily$, by \Cref{lem:32}, with a  probability more than $1-1/n$ we have  
	$
	\tr{\valu}_i(\alloc'_i) \geq \tr{\valu}_i(\alloc_i)/(480\alpha).
	$
	By the union bound, This inequality holds for all not-easily satisfiable agents with a non-zero probability. This implies that there exists an outcome where for every agent \( \tr{\agent}_i \) we have 
	\begin{align*}
		\tr{\valu}_i(\alloc'_i) &\geq \tr{\valu}_i(\alloc_i) \cdot \min\left( \frac{1}{480\alpha},\frac{1}{12\alpha(\log (80\alpha) + \log \log (\tr{n}+1))}- \frac{3\beta}{2} \right)\\
		&\geq \frac{\tr{\valu}_i(\alloc_i)}{480\alpha(\log (80\alpha) + \log \log (\tr{n}+1))}- \frac{3\beta}{2}.
	\end{align*}
\end{proof}

\begin{algorithm}[t]
	\caption{Merging $\mathcal{\alloc}^\easily$ and $\mathcal{\alloc}^\noteasily$}
	\LinesNumbered
	\label{alg:construct_allocation}
	
	\KwIn {Allocations \( \mathcal{A}^\easily \), \( \mathcal{A}^\noteasily \)}
	\KwOut{ Allocation \( \mathcal{A}' \)}
	\For{each unique item $\tr{\ite}_j \in \mathcal{\alloc}^\easily \cup \mathcal{\alloc}^\noteasily$}
	{ Allocate $\tr{\ite}_j$ to the same agent in $\mathcal{\alloc}'$ }

	\For{each agent $\tr{\agent}_i \in \easily$}
	{ 
		Sort items in $\mathcal{\alloc}^\easily_i$ by $\widetilde{\valu}_i$ in descending order and 
		partition into blocks $\block_{i,1}, \block_{i,2}, \ldots$ of size 2\;
	}

	\For{each block $\block_{i,j}$}
	{
		\If{both items are unique}
		{ \textbf{Continue} \tcp*{Items have already resolved}}
		\ElseIf{one item is unique, one is shared}
		{ Allocate the shared item to its owner in $\mathcal{\alloc}^\noteasily$\; }
		\ElseIf{both items are shared and belong to different agents in $\mathcal{\alloc}^\noteasily$}
		{ 
			Randomly allocate one of them to $\tr{\agent}_i$ in $\mathcal{\alloc}'$\;
			Allocate the remaining item to its owner in $\mathcal{\alloc}^\noteasily$\;
		}
	}
	
	\For{each agent $\tr{\agent}_i \in \noteasily$}
	{ 
		$L =$ Set of \textit{linked} items belonging to $\tr{\agent}_i$\;
		\For{each pair of linked items in $L$}
		{
			\textbf{Ask} $\tr{\agent}_i$ to choose one item to keep \tcp*{$\tr{\agent}_i$ optimizes her bundle in $\mathcal{\alloc}'_i$}
			\textbf{Allocate} the selected item to $\tr{\agent}_i$ in $\mathcal{\alloc}'$\;
			\textbf{Allocate} the other item to the easily satisfiable agent in $\mathcal{\alloc}^\easily$\;
		}
		
	}

\end{algorithm}


\section{Warm-up, A ${1}/{O(\log n \log \log n)}$-$\MMS$ Allocation Algorithm}\label{sec:warmup}

For agents with subadditive valuations, \textcite{maseed123} prove the existence of a $\nicefrac{1}{O(\log n \log \log n)}$-$\MMS$ allocation. 
As a warmup, here we show that \Cref{lemma:second} offers a much simpler proof for the existence of such allocations.

Before presenting the algorithm, we start with a simple yet important lemma about the maximin-share criteria. This lemma captures a fundamental property of $\MMS$ and has been widely used in previous studies.

\begin{lemma}[\cite{ghodsi2018fair,amanatidis2015approximation} - Restated]\label{reduction}
	If a \(\beta\)-$\MMS$ allocation exists for instances where no single item's value for any agent exceeds \(\beta\), then a \(\beta\)-$\MMS$ allocation exists even without this assumption.
\end{lemma}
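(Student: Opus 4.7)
\medskip
\noindent\textbf{Proof proposal.} The plan is to prove the statement by iteratively removing ``big'' items and the corresponding agents, reducing to an instance that satisfies the hypothesis. Call an item $b \in M$ \emph{big} for agent $a_i$ if $\valu_i(\{b\}) > \beta \cdot \MMS_i = \beta$ (using the normalization $\MMS_i = 1$). If no item is big for any agent, the hypothesis applies directly and produces a $\beta$-\MMS{} allocation. Otherwise, fix any big item $b$ together with a witnessing agent $a_i$, allocate the singleton bundle $\{b\}$ to $a_i$, and recurse on the residual instance $(\agents \setminus \{a_i\},\, \items \setminus \{b\})$ with the same valuation functions restricted to $\items \setminus \{b\}$. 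Since $\valu_i(\{b\}) > \beta$, agent $a_i$'s $\beta$-\MMS{} guarantee is met by this singleton alone, so it suffices to produce a $\beta$-\MMS{} allocation for the $n-1$ remaining agents on $\items \setminus \{b\}$.

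The main step is to argue that the residual instance still satisfies the hypothesis of the lemma. For every remaining agent $a_j$, I would show that
\[
\MMS_{\valu_j}^{n-1}(\items \setminus \{b\}) \;\geq\; \MMS_{\valu_j}^{n}(\items) \;=\; 1.
\]
To see this, take an optimal $n$-partition $\pi_1,\ldots,\pi_n$ of $\items$ witnessing $\MMS_{\valu_j}^{n}(\items)$, and assume WLOG that $b \in \pi_n$. Form the $(n-1)$-partition
\[
\pi_1,\; \pi_2,\; \ldots,\; \pi_{n-2},\; \pi_{n-1} \cup (\pi_n \setminus \{b\})
\]
of $\items \setminus \{b\}$. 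By monotonicity of $\valu_j$, the merged bundle has value at least $\valu_j(\pi_{n-1}) \geq 1$, and the other $n-2$ bundles are untouched and also have value at least $1$. Hence every remaining agent has MMS value at least $1$ on the residual instance.

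Consequently, for every remaining agent $a_j$ and every item $b' \in \items \setminus \{b\}$, the original bound $\valu_j(\{b'\}) \leq \beta$ yields $\valu_j(\{b'\}) \leq \beta \leq \beta \cdot \MMS_{\valu_j}^{n-1}(\items \setminus \{b\})$, so no item is big in the residual instance either. Applying induction on the number of big items (which strictly decreases at each step), the hypothesis eventually produces a $\beta$-\MMS{} allocation of the leftover items to the leftover agents. Combining with the singleton bundles given to the previously removed agents yields a $\beta$-\MMS{} allocation of the original instance.

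The only non-routine step is the MMS-monotonicity inequality $\MMS_{\valu_j}^{n-1}(\items \setminus \{b\}) \geq \MMS_{\valu_j}^{n}(\items)$, which is the standard ``delete-then-merge'' argument and uses only monotonicity of the valuations; the rest of the proof is bookkeeping of the induction.
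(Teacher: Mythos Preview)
The paper does not prove this lemma; it is cited from prior work \cite{ghodsi2018fair,amanatidis2015approximation}, so there is no in-paper proof to compare against. Your argument is the standard one and is essentially correct.

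There is one expository slip worth fixing. After removing a single big pair $(a_i,b)$ you conclude ``so no item is big in the residual instance either.'' That does not follow: other big pairs $(a_j,b')$ with $a_j\neq a_i$ and $b'\neq b$ may survive into the residual instance. What your MMS-monotonicity argument actually establishes is the weaker (and sufficient) fact that no \emph{new} big pairs are created --- if $\valu_j(\{b'\})\le \beta\cdot\MMS_{\valu_j}^{n}(\items)$ held before, then since $\MMS_{\valu_j}^{n-1}(\items\setminus\{b\})\ge \MMS_{\valu_j}^{n}(\items)$ the same bound holds relative to the new threshold. Together with the removal of at least the pair $(a_i,b)$, this gives the strict decrease you invoke in the very next sentence, and the induction (on the number of big pairs, or simply on $n$) goes through. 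With that one sentence tightened, the proof is complete.
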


\Cref{reduction,lemma:second} directly imply \Cref{lemma:reduction}.

\begin{lemmarep}\label{lemma:reduction}
	The existence of an $\alpha$-multiallocation that guarantees a $\frac{1}{\eta}$-\MMS\ approximation for the subadditive maximin share problem leads to the existence of a 	
	\begin{equation}\label{approxl}
		\left(\frac{1}{10800\alpha\eta(\log \alpha + \log \log n)}\right)\textsf{-MMS}
	\end{equation}
	guarantee for the subadditive maximin share problem.
\end{lemmarep}

\begin{appendixproof}
	We begin by noting that for \( \alpha = 1 \), a \( 1 \)-multiallocation is a valid allocation, and in this case, the statement of \Cref{lemma:reduction} trivially holds.
	For \( \alpha \geq 2 \), we use \Cref{reduction}. Let 
	\begin{equation}\label{eq:beta}
		\beta = \frac{1}{10800 \alpha\eta(\log \alpha + \log \log n)}.
	\end{equation}   Our goal is to prove that a $\beta\text{-\MMS} $ allocation always exists. By \Cref{reduction}, without loss of generality, we assume that the value of any item for any agent is less than $ \beta$.
	
	Given the existence of an \( \alpha \)-multiallocation, \Cref{lemma:second} guarantees that there exists a $$ \left(\frac{1}{480 \alpha \eta (\log (80\alpha) + \log (\log n + 1))} - \frac{3\beta}{2}\right)\text{-\MMS}$$ allocation.
	For \( \alpha \geq 2 \), we know that \( \log(80\alpha) \leq 8 \log \alpha \). Hence, we have:
	
	\begin{align*}
		480 \alpha \eta \left( \log(80\alpha) + \log(\log n + 1) \right)
		&\leq 480 \alpha \eta \left( 8 \log \alpha + \log(\log n + 1) \right)\\
		&\leq 480 \alpha \eta \left( 8 \log \alpha + \log(2 \log n) \right) & n \geq 2 \\
		&\leq 480 \alpha \eta \left( 8 \log \alpha + \log 2 + \log (\log n) \right)\\
		&
		\leq 480 \alpha \eta \left( 9 \log \alpha + \log (\log n) \right) \quad &\alpha \geq 2\\
		&
		\leq 4320 \alpha \eta(\log \alpha + \log \log n)\\
		&=\frac{2}{5\beta}. &\Cref{eq:beta}.
	\end{align*}
	
	Thus, we obtain a $$\left(\frac{5\beta}{2} - \frac{3\beta}{2}\right)\text{-\MMS } = \beta\text{-\MMS}$$  allocation, which is the desired guarantee.
\end{appendixproof}

The algorithm presented by \textcite{maseed123} is based on the following lemma.

\begin{lemma}[\cite{maseed123}, Lemma 6.1] \label{seddighin}
	For any instance of the fair allocation problem with subadditive valuations, there always exists an allocation that guarantees $1/4$-$\MMS$ to at least $ n/3$\footnote{In \cite{maseed123}, the lemma is stated using \( \lfloor n/3 \rfloor \) instead of \( n/3 \). However, their proof is valid for \( n/3 \).} of the agents.
\end{lemma}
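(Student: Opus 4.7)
The plan is to prove \Cref{seddighin} via a bag-filling procedure tailored to the subadditive setting, analogous in spirit to classical additive bag-filling but with extra care because subadditivity controls unions only from above. First, I would apply \Cref{reduction} (or a direct ``heavy item'' peeling argument) to normalize $\MMS_i = 1$ for every agent and assume that every single item has value at most $1/4$ for every agent; heavy items can be handed one-by-one to distinct agents, each of whom then counts toward the $n/3$ target, reducing the problem to the bounded-value case.

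With this normalization in hand, I would run the following bag-filling procedure. Initialize the set of unserved agents $Q = \agents$ and remaining items $R = \items$. In each round, start with an empty bag $B$ and add items from $R$ one at a time in an arbitrary order; the instant some $\agent_i \in Q$ has $\valu_i(B) \geq 1/4$, commit $B$ to $\agent_i$, remove $\agent_i$ from $Q$, remove the items of $B$ from $R$, and begin a new round. Because every item has value at most $1/4$, the committed bag $B_k$ satisfies $\valu_{i_k}(B_k) < 1/2$ for the receiving agent, and in fact for any agent $\agent_i$ that was still in $Q$ at the moment $B_k$ was committed, $\valu_i(B_k) < 1/2$ as well.

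The key step is to show this procedure always serves at least $n/3$ agents. I would argue by contradiction: suppose the procedure terminates with fewer than $n/3$ agents served. Termination means the current bag has absorbed all of $R$ and no remaining agent values it at $1/4$. Fix any remaining $\agent_i \in Q$ and consider its MMS partition $P^i_1, \dots, P^i_n$, each of value at least $1$. Since $\valu_i(R) < 1/4 < 1$, monotonicity implies that every $P^i_j$ has at least one item consumed by some committed bundle $B_k$. With fewer than $n/3$ committed bundles, pigeonhole then forces some $B_k$ to intersect at least three of the $P^i_j$'s; I would try to combine this structural fact with the upper bound $\valu_i(B_k) < 1/2$ (valid because $\agent_i$ was unserved when $B_k$ was produced) and with subadditivity, $\valu_i(\items) \leq \valu_i(R) + \sum_k \valu_i(B_k)$, to extract a contradiction against $\valu_i(\items) \geq 1$.

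The main obstacle is precisely that last step: subadditivity bounds $\valu_i$ of a union from above, but I need a lower bound on how much value a small bag $B_k$ can pack from the MMS bundles it touches. A direct counting argument may leave a gap, and I expect the clean way to close it is to restrict bag sizes, stopping and discarding a bag once it has absorbed more than some threshold number of items without triggering a commit, and then using the concentration bound of \Cref{lemma:1} on the agent's remaining valuation to guarantee that a fresh small random subsample of $R$ still has value at least $1/4$ for some $\agent_i \in Q$. If that controlled-size variant still makes progress, the bag-filling loop continues until $n/3$ agents have been served, completing the proof.
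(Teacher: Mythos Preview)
The paper does not actually prove \Cref{seddighin}; it is quoted verbatim from \cite{maseed123} and used as a black box. So there is no ``paper's own proof'' to compare against here.

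That said, your proposed argument has a genuine gap, and it is exactly the one you flag yourself. The bag-filling procedure gives, for any agent $\agent_i$ left unserved at termination, the inequalities $\valu_i(R) < 1/4$ and $\valu_i(B_k) < 1/2$ for every committed bag $B_k$. Subadditivity then yields $\valu_i(\items) \leq \valu_i(R) + \sum_k \valu_i(B_k) < 1/4 + t/2$ where $t$ is the number of committed bags. But in the subadditive world the only lower bound you have on $\valu_i(\items)$ is $\valu_i(\items) \geq 1$ (from monotonicity and any single MMS bundle); crucially, the existence of an $n$-way partition into bundles of value $\geq 1$ does \emph{not} force $\valu_i(\items) \geq n$. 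So the contradiction you are aiming for collapses to $t > 3/2$, which says nothing about $t \geq n/3$. Your pigeonhole observation that some $B_k$ meets at least four of the $P^i_j$'s is correct but does not help: you would need an upper bound on $\sum_j \valu_i(P^i_j \cap B_k)$, and subadditivity only gives the lower bound $\sum_j \valu_i(P^i_j \cap B_k) \geq \valu_i(B_k)$.

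Your proposed patch via \Cref{lemma:1} does not close this gap either. That lemma is a concentration statement for \emph{independent random} subsets, under the hypothesis that no set of size $O((\log \hat n)/p)$ carries half the value; neither the independence structure nor the small-set hypothesis is available inside a deterministic bag-filling loop, and restricting bag sizes still leaves you unable to upper-bound $\sum_j \valu_i(P^i_j \cap B_k)$ in terms of $t$. The actual proof in \cite{maseed123} does not proceed by plain bag-filling; it uses a different construction that sidesteps the need to lower-bound $\valu_i(\items)$ altogether. If you want a self-contained route inside this paper, note that the $k=1$ case of \Cref{lemma:fourth} already yields a stronger statement (half the agents receive value $\geq 1/2$), which would subsume \Cref{seddighin}.
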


Building on \Cref{seddighin}, \textcite{maseed123} suggest the following algorithm:  construct a multiallocation by iteratively selecting a subset of agents and allocating them bundles. At each step $t$, assuming $n_t$ agents have remained, choose a subset of size at least  $ n_t/3$ of the agents and allocate each agent a bundle of value at least \( 1/4 \) using \Cref{seddighin}. These agents are then removed, and the process continues with the remaining agents and all the items. 

Since at each step, a fraction $1/3$ of the remaining agents receive a bundle, we have
$
n_{t+1} \leq {2n_{t}}/{3}, 
$
which implies that the process completes in at most $ \log_{3/2} n + 2$ steps. Therefore, at the end of the algorithm, each item is allocated at most $ \log_{3/2} n  + 2$ times. Denote this multiallocation by $\mathcal{A}$.

At this stage, without \Cref{lemma:second}, \textcite{maseed123} further complicate the process by modifying the agents' valuation functions to ensure that each agent receives at least \( \log n \) bundles rather than one. This adjustment slightly weakens the approximation guarantee but ensures that at least one of the \(\log n\) bundles retains significant value during the conversion from multiallocation to allocation. Here, we can directly apply \Cref{lemma:reduction} to \( \mathcal{A} \), which proves the existence of a $\nicefrac{1}{O(\log n \log \log n)}$-\(\MMS \) allocation.

\begin{theoremrep}\label{theorem:w1}
	The maximin share problem with subadditive agents admits a
	$\frac{1}{648000 \log n  \log \log n}\text{-}\MMS
	$ guarantee.
\end{theoremrep}
\begin{appendixproof}
	For $n=2$, a simple cut and choose algorithm guarantees $1$-$\MMS$ to both the agents. For $n\geq3$, we compute the approximation guarantee of the algorithm. 
	The algorithm terminates after at most $\log_{3/2} n + 2$ steps. For $n\geq 3$, we have $\log_{3/2} n + 2 \leq 3\log n$. Thus, the algorithm terminates in at most $3\log n$ steps and produces a $3\log n$-multiallocation where each agent values their bundle at least $1/4$.
	Setting $\alpha = 3\log n$ and $\eta = 4$ in \Cref{approxl} implies that a
	\[
	\frac{1}{10800 \cdot 3\log n \cdot 4 (\log(3\log n) + \log \log n)}\text{-}\MMS
	\]
	allocation exists. The logarithmic term, \(\log(3\log n) + \log \log n\), simplifies to  
	$
	\log 3 + 2\log \log n. 
	$  For $n\geq 3$, we have $ \log 3 + 2\log \log n \leq 5 \log \log n$.
	Therefore, the denominator becomes  
	$
	129600 \log n (5\log \log n) = 648000 \log n \log \log n.
	$  
	Thus, the algorithm guarantees the existence of a  
	\[
	\frac{1}{648000 \log n  \log \log n}\text{-}\MMS
	\] 
	allocation.
\end{appendixproof}

\section{Sublogarithmic Approximation Guarantee for Polynomially Many Items }\label{sec:sublogarithmicapprox}
In this section, we propose an algorithm that achieves an improved $\MMS$ guarantee when the number of items is $\textsf{poly}(n)$.  

A key component in breaking the logarithmic barrier is \Cref{lemma:third}.  

\begin{lemmarep}\label{lemma:third}
	Let $Q \subseteq \agents$ be a subset of agents. For $k = |\agents| / (6|Q|)$, there exists a subset $Q' \subseteq Q$ and $\lceil  k\rceil $ disjoint allocations  $\mathcal{\alloc}^1, \mathcal{\alloc}^2,\ldots, \mathcal{\alloc}^{\lceil  k\rceil}$ of items to agents of $Q'$ such that  $|Q'| \geq |Q|/6$ and for every agent $a_{i} \in Q'$ and $1 \leq j \leq \lceil  k\rceil$ we have 
	$\valu_{i}(A^j_{i}) \geq 1/4.$ 
\end{lemmarep}

\begin{appendixproof}
	Without loss of generality, we assume  $|Q| < |N|/6$; otherwise, we have $ \lceil k \rceil=  1$,  in which case \Cref{lemma:third}  directly follows from  \Cref{seddighin}.

	Define $Q^*$ as the set containing $6\lfloor k\rfloor$ copies of each agent in $Q$. That is, for every $\agent_i \in Q$, there exist $6\lfloor k\rfloor$ agents with the same valuation function as $\agent_i$ in $Q^*$, denoted by  $ \agent_{i,1}, \agent_{i,2}, \dots, \agent_{i,6\lfloor k\rfloor}. $
	
	Consider an instance with the agents in $Q^*$. Since $|Q^*| = 6\lfloor k\rfloor|Q| \leq |\agents|$, and the $\MMS$ value of the agents in an instance with $|N|$ agents is  $1$, each agent in this instance has an $\MMS$ value of at least 1.
	By \Cref{seddighin}, there exists an allocation for the agents in $Q^*$ that guarantees a value of at least $1/4$ for at least $\lfloor |Q^*|/3 \rfloor = 2\lfloor k\rfloor|Q|$ agents in $Q^*$. Let \(\mathcal{A}\) be this allocation. For each agent \(\agent_i \in Q\), define \(C_i\) as the set of copies of \(\agent_i\) that receive a value of at least \(1/4\) in this allocation, i.e.,
	$
	C_i = \{ \agent_{i,j} \mid \valu_i(\alloc_{i,j}) \geq 1/4 \},
	$
	where \(\alloc_{i,j}\) denotes the bundle allocated to \(\agent_{i,j}\) in \(\mathcal{A}\).
	Thus, the total number of agents in $Q^*$ who receive a value of at least $1/4$ in $\mathcal{\alloc}$ is $\sum_{\agent_i \in Q} |C_{i}|$. We now show that for at least $|Q|/6$ of the sets $C_{i}$, we have $|C_{i}| \geq \lceil k \rceil$.  
	Assume, for contradiction, that this is not the case. Then, we have:  
	\begin{align*}
		\sum_{1 \leq i \leq |Q|} |C_i|  
		&< \frac{|Q|}{6} \cdot 6\lfloor k \rfloor + \left(|Q| - \frac{|Q|}{6}\right) \lfloor k \rfloor\\ 
		&= |Q| \lfloor k \rfloor  + \frac{5|Q| \lfloor k\rfloor}{6}\\  
		&= \frac{11|Q| \lfloor k \rfloor}{6} \\ 
		&<  2|Q| \lfloor k\rfloor.
	\end{align*}
	However, this contradicts the fact that at least $2\lfloor k\rfloor |Q|$ agents receive bundles with a value of at least $1/4$ in $\mathcal{\alloc}$. Hence, for at least $\lceil |Q|/6\rceil$ of the sets $C_i$, it holds that $|C_i| \geq k$.
	
	Finally, define $Q' = \{\agent_{{i}} \mid \agent_{{i}}\in Q \text{ and } |C_{i}| \geq k \}$ and consider an arbitrary ordering of the agents in each $C_{i}$. For every $1 \leq j \leq \lceil k \rceil$, define  allocation $\mathcal{\alloc}^j$ as follows:
	$$
	\forall \agent_{i} \in Q' \qquad \mathcal{\alloc}^j_{i} = \text{the bundle allocated to the } j\text{'th agent in } C_{i}. 
	$$
	These allocations satisfy the conditions of \Cref{lemma:third}.
	
\end{appendixproof}

We now leverage \Cref{lemma:third} to design an algorithm that provides a better $\MMS$ approximation guarantee for instances with subadditive valuations when $m$ is $\textsf{poly}(n)$.  \Cref{algorithm:poly} outlines our method for this case. The algorithm is simple: We start with the entire set of agents. At each step $ t $, let \( \agents_t \) be the set of remaining agents.  By \Cref{lemma:third}, we can select at least \( \lceil |\agents_t|/6\rceil  \) of the agents in \( \agents_t \) and allocate bundles to each of them, such that each selected agent receives $\lceil |\agents|/(6|\agents_t|) \rceil$ bundles, each worth at least $1/4$ to her. We then remove these agents and repeat the process with the remaining agents and all items until no agent remains. The idea is that as we remove more agents, we can allocate additional bundles to the remaining agents using \Cref{lemma:third}. Finally, for each agent who receives multiple bundles, we randomly select one of their bundles with equal probability as their allocated bundle.  We then apply \Cref{alg:lem2} to convert this multiallocation into an allocation.

\begin{algorithm}[t]
	\caption{Sub-Polylogarithmic Approximation Algorithm}
	\label{algorithm:poly}
	\begin{algorithmic}[1]
		\STATE \textbf{Input:} A set of agents $\agents$ and a set of items $\items$
		\STATE \textbf{Output:} An allocation of items to agents
		\STATE Initialize $\agents_1 \gets \agents$ \tcp*{set of remaining agents}
		\STATE Initialize $\mathcal{A}$ as an empty allocation\;
		\STATE $i \gets 1$\;
		\WHILE{$\agents_i \neq \emptyset$}
		\STATE Select a set $S$ of at least $\lceil  |\agents_i|/6 \rceil$ agents and allocate $\lceil \frac{|\agents|}{6|\agents_i|} \rceil$\\ bundles worth at least $1/4$ to each in $S$
		\tcp*{Using \Cref{lemma:third}} 
		\STATE $\agents_{i+1} = \agents_{i} \setminus S$\;
		\ENDWHILE
		\FOR{each agent who received multiple bundles}
		\STATE Randomly select one of their bundles as the final multiallocation\;
		\ENDFOR
		\IF{an item is allocated more than $18\sqrt{\log_{5/6} m}$ times in $\mathcal{A}$}
		\STATE Goto Line 3\;	
		\ENDIF
		\STATE Convert $\mathcal{A}$ to an allocation \tcp*{Using \Cref{alg:lem2}}
	\end{algorithmic}
\end{algorithm}

Denote by $\mathcal{\alloc}$ the final multiallocation. In \Cref{lem:numberofoccurances}, we provide an upper bound on the number of bundles that contain each item in $\mathcal{\alloc}$. 
\begin{lemmarep}\label{lem:numberofoccurances}
	With a non-zero probability, $\mathcal{A}$ is a $(18\sqrt{\log_{5/6} m})$-multiallocation.
\end{lemmarep}
\begin{appendixproof}
	First, note that by \Cref{lemma:third}, at each step $t$, at least $ |\agents_t| / 6 $ of the remaining agents receive bundles. Therefore, we have
	\begin{align*}
		|N_{t+1}| &\leq |N_t| - |N_t|/{6}\\
		&\leq |N_t| \frac{5}{6}. \numberthis \label[ineq]{ineq:steps}
	\end{align*}
	\Cref{ineq:steps} implies that after $ t $ steps, we have
	\begin{equation}\label[ineq]{ineq:totalsteps}
		|\agents_{t+1}|\leq  (|\agents|) \left( \frac{5}{6} \right)^t.
	\end{equation} 
	Furthermore, the number of bundles each agent receives at step $t$ is at least	\begin{equation} \label[ineq]{number_of_bundles_at_each_step}
		\left\lceil  \frac{|\agents|}{ 6 |\agents_{t}|} \right\rceil\geq	\left\lceil \frac{|\agents|}{ 6 \left(|\agents| \left( \frac{5}{6} \right)^{t-1} \right)}  \right\rceil\geq \frac{1}{6}{\left(\frac{6}{5}\right)^{t-1}}  
	\end{equation}
	bundles. Denote by \( t^* \) the number of steps the algorithm takes before returning the multiallocation. By \Cref{ineq:totalsteps}, we obtain  
	$$
	t^* \leq \log_{6/5} |\agents| + 6.
	$$

	Given that the number of bundles each selected agent receives in step \( t\) is lower-bounded by \Cref{number_of_bundles_at_each_step}, and considering the method used to determine the final multiallocation, we can conclude that the probability of each item being allocated at step \( t \) is at most
	\begin{equation} \label[ineq]{ineq:itemprob}
		\min\left(\frac{6}{(\frac{6}{5})^{t-1}},1\right).
	\end{equation}
	Let $c=\sqrt{\log_{6/5} m}$. We now demonstrate that, with a non-zero probability, all items appear in fewer than \( 18c \) bundles. 
	
	Fix an item \( \ite_j \) and steps \( t_1, t_2, \ldots, t_{c} \) such that for all $1 \leq i \leq c$, $17c+1 \leq t_i \leq t^*$. Note that we can assume without loss of generality that $t^* \geq 18c$; otherwise, each item is allocated at most $t^* \leq 18c$ times, which directly implies \Cref{lem:numberofoccurances}. The probability that \( \ite_j \) appears in all these steps is
	\begin{align*}
		\Pr[\ite_j \text{ appears in all steps $t_1,t_2,\ldots,t_c$}] &\leq \prod_{i=1}^{c} \min\left( \frac{6}{\left( \frac{6}{5} \right)^{t_i - 1} }, 1 \right) &\text{\Cref{ineq:itemprob}}\\
		&\leq \prod_{i=1}^{c} \frac{6}{\left( \frac{6}{5} \right)^{t_i - 1}} \\
		&\leq \left( \frac{6}{\left( \frac{6}{5} \right)^{17c } } \right)^{c} & t_i\geq 17c+1\\
		&=  \left(6^{(1/c) }\cdot(\frac{5}{6})^{17}\right)^{c^2}.
	\end{align*}
	
	On the other hand, the number of ways to choose $c$ steps from those occurring after step $17c$ is at most  
	\begin{align*}
		\binom{t^*-17c}{c} &\leq \binom{\log_{6/5}|N|+6-17c}{c}\\
		&\leq \binom{\log_{6/5}|N|}{c} &c\geq 1\\
		&= \frac{(\log_{6/5}|N|)!}{c! (\log_{6/5}|N|-c)!}\\
		&\leq \frac{(\log_{6/5}|N|)^c}{c!}\\
		&\leq \frac{(\log_{6/5}|N|)^c}{(c/e)^c} &c! \geq (c/e)^c \\
		&\leq \left(\frac{e\log_{6/5} |\agents|}{c}\right)^{c} \\
		&\leq (ec)^{c} \\
		&= \left((ec)^{1/c}\right)^{c ^2},
	\end{align*}
	where $e \simeq 2.718$ is the base of natural logarithms. Therefore, the probability that an item $\ite_j$ appears in more than $c$ bundles is bounded by:
	$$
	\left(6^{(1/c) }\cdot(\frac{5}{6})^{17}\right)^{c^2} \cdot \left((ec)^{1/c}\right)^{c ^2} = \left((6ec)^{(1/c) }\cdot(\frac{5}{6})^{17}\right)^{c^2} < (\frac{5}{6})^{c^2} <\frac{1}{m}
	$$  
	By the union bound, this property holds for all items with a non-zero probability. Consequently, there exists an event such that each item is allocated to at most $c = \sqrt{\log_{6/5} m}$ agents who have received bundles in steps after $17\sqrt{\log_{6/5} m}$. Since each item is allocated at most once per step, it follows that during the first $17\sqrt{\log_{6/5} m}$ steps, each item is allocated at most $17\sqrt{\log_{6/5} m}$ times. Thus, in this event, every item is allocated in at most $18\sqrt{\log_{6/5} m}$ steps. 
\end{appendixproof}

Finally, \Cref{lem:numberofoccurances} combined with Lemma \ref{lemma:second}, yields Theorem \ref{theorem:sqlm}.

\begin{theoremrep}\label{theorem:w2}
	\label{theorem:sqlm}
	The maximin share problem with subadditive agents admits a $\frac{1}{12441600 \cdot \sqrt{\log m} \cdot \log \log m}$-$\MMS
	$ guarantee.
\end{theoremrep}
\begin{appendixproof}
	For $n=2$, a simple cut-and-choose algorithm guarantees $1$-$\MMS$ to both agents. Additionally, if $m \leq n$, we can trivially guarantee $1$-$\MMS$. For $n \geq 3$ and $m > n$, we compute the approximation guarantee of the algorithm.
	The algorithm terminates after at most $$18\sqrt{\log_{6/5} m} \leq 36\sqrt{\log m}$$ steps. This produces a $36\sqrt{\log m}$-multiallocation where each agent values their bundle at least $1/4$.
	
	Setting $\alpha = 36\sqrt{\log m}$ and $\eta = 4$ in \Cref{approxl}, the approximation ratio is:
	\[
	\frac{1}{10800 \cdot 36\sqrt{\log m} \cdot 4 (\log(36\sqrt{\log m}) + \log \log n)}\text{-}\MMS.
	\]
	
	For the logarithmic term, we simplify:
	\[
	\log(36\sqrt{\log m}) + \log \log n \leq \log 36 + \frac{1}{2}\log \log m + \log \log n.
	\]
	Since $\log 36 \leq 6$ and $\log \log n \leq \log \log m$ (because $m \geq n$), we have:
	\[
	\log 36 + \frac{1}{2}\log \log m + \log \log m \leq 6 + 2\log \log m\leq 8\log\log m
	\]
	Thus, the denominator becomes:
	\[
	10800 \cdot 36 \cdot 4 \cdot 8 \cdot \sqrt{\log m} \cdot \log \log m = 12441600 \cdot \sqrt{\log m} \cdot \log \log m.
	\]
	
	The algorithm guarantees the existence of a
	\[
	{\frac{1}{12441600 \cdot \sqrt{\log m} \cdot \log \log m}\text{-}\MMS}.
	\]
	allocation.
\end{appendixproof}

\section{Main Contribution: A $1/O((\log \log n)^2)$-\MMS\ Guarantee}\label{sec:main}

The main contribution of this section is Lemma~\ref{lemma:fourth} which we prove first and then use it to state our main theorem (Theorem~\ref{theorem:main}). 

\begin{lemma} \label{lemma:fourth}
	Let $Q \subseteq N$ be a subset of agents. There exists a subset $Q' = \{\agent_{x_1}, \agent_{x_2}, \ldots,\agent_{x_{|Q'|}}\} \subseteq Q$ and an allocation  $A_{x_1},A_{x_2},\ldots,A_{x_{|Q'|}}$ of items to agents of $Q'$ such that $|Q'| \geq |Q|\frac{k}{k+1}$ for $k = \lfloor n/|Q| \rfloor$ and for every agent $\agent_{x_i} \in Q'$  we have 
	$\valu_{x_i}(A_{x_i}) \geq 1/2.$
\end{lemma}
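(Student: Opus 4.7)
The plan is to follow the randomized construction outlined just before the lemma statement. I would begin by building a bipartite guiding graph $\mathcal{G}$ whose allocation side is partitioned into $|Q|$ parts (one per agent in $Q$), every allocation node having degree exactly $k+1$, and whose random-seed side consists of nodes each with exactly $|Q|$ edges, one into every part. I would invoke a standard existence result for biregular bipartite graphs of arbitrarily large girth $g$ (obtainable, e.g., by a probabilistic deletion argument or an LPS-type explicit construction) so that $g$ can be chosen large enough in what follows. Then I would label each allocation node independently and uniformly at random with one of the $n$ bundles of its corresponding agent's MMS partition; each such bundle has value at least $1$, and every item appears in any fixed node's label with probability $1/n$.

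The core combinatorial step is, for each item $\ite_x$, deciding which edges carry $\ite_x$ in their edge label. Fix $\ite_x$ and consider the induced subgraph $\mathcal{H}_x$ of $\mathcal{G}$ restricted to allocation nodes whose label contains $\ite_x$. Every allocation node in $\mathcal{H}_x$ still has degree $k+1$, while each random-seed node has expected degree $|Q|/n \leq 1/k$. Combined with the large-girth hypothesis, a standard branching/percolation bound shows that the expected number of allocation nodes that lie in a connected component of $\mathcal{H}_x$ containing a cycle is negligible (vanishing as $g \to \infty$), so up to a controlled error one may treat $\mathcal{H}_x$ as a forest. On the forest portion, I would verify a $k$-fold Hall condition: any set $S$ of allocation nodes contributes $(k+1)|S|$ edges living inside a forest with at least $(k+1)|S|+1$ vertices, so it has at least $k|S|+1$ random-seed neighbors. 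By the standard multi-matching consequence of Hall's theorem this yields $k$ edge-disjoint matchings each saturating the allocation side; adding $\ite_x$ to every edge used by these matchings guarantees (i) disjointness of edge labels at each random-seed node and (ii) that each item in an allocation node's label is absent from at most one of its $k+1$ incident edges.

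To finish, I would sample a random-seed node $r$ uniformly and give each agent the items appearing on $r$'s edge into that agent's part. Fix an agent $\agent_{x_i}\in Q$ and let $T$ be the label of the allocation node joined to $r$. Its $k+1$ incident edges carry labels $L_0,\ldots,L_k$ of the form $L_j = T\setminus M_j$ with pairwise disjoint $M_j\subseteq T$. For any two indices $j_1\ne j_2$, disjointness forces $L_{j_1}\cup L_{j_2}=T$, so subadditivity gives
\[
\valu_{x_i}(L_{j_1})+\valu_{x_i}(L_{j_2})\geq \valu_{x_i}(T)\geq 1,
\]
whence at most one of the $k+1$ edge labels has value strictly below $1/2$. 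Averaging over the uniform choice of $r$, each agent in $Q$ receives a bundle of value at least $1/2$ with probability at least $k/(k+1)$, and linearity of expectation yields an outcome in which at least a $k/(k+1)$ fraction of agents in $Q$ are satisfied; those form the desired $Q'$.

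The main obstacle is the formal treatment of the gap between "almost a forest" and "a forest." One must quantify the contribution of cyclic components of $\mathcal{H}_x$ precisely, pick $g$ large enough as a function of $k$ and $|Q|$ so that the fraction of allocation nodes living in such components is dominated by a small constant, and absorb these nodes into the $1/(k+1)$-fraction of agents the lemma already allows to fail. A secondary subtlety is ensuring that after running the edge-labeling procedure simultaneously for every item $\ite_x$, the edge labels at each random-seed node remain globally disjoint; this follows because for each item the contribution is built from a matching (so at most one edge per random-seed node receives $\ite_x$), and disjointness composes across items.
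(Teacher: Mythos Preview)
Your proposal is correct and follows essentially the same approach as the paper: construct a high-girth guiding graph, label allocation nodes by random MMS bundles, use Hall's theorem on the forest part of each $\mathcal{H}_x$ to find $k$ disjoint matchings, and then invoke subadditivity to conclude that at most one of the $k+1$ edge labels at each allocation node can fall below $1/2$.

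Two small points where your sketch diverges from the paper are worth noting. First, the girth must be chosen as a function of $|M|$ as well as $k$ and $|Q|$: the paper takes $g = (10(k+10)(|M|+10)(|Q|+10)+10)/\epsilon$ so that, after bounding the probability a fixed allocation node lies in a cyclic component of $\mathcal{H}_x$ by $\epsilon/|M|$, a union bound over all items still leaves probability $1-\epsilon$ that \emph{every} item in that node's label sits in $k$ of its $k+1$ edge labels. Second, rather than ``absorbing'' the cyclic-component error into the $1/(k+1)$ slack (there is in fact no slack, since the lemma demands exactly $|Q|\frac{k}{k+1}$), the paper lets $\epsilon\to 0$: for every $\epsilon>0$ the expected number of satisfied agents is at least $(1-\epsilon)\frac{k}{k+1}|Q|$, and since this count is integer-valued and $\epsilon$ is arbitrary, some outcome achieves $\lceil \frac{k}{k+1}|Q|\rceil \ge \frac{k}{k+1}|Q|$.
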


Recall that in Lemma~\ref{lemma:fourth} we have an \MMS\ problem in which every agent has $n$ disjoint choices, each of which provides a value of at least 1 to her. However, we only wish to allocate items to a smaller subset of agents $Q$. Recall that we denote the ratio of $n$ over $|Q|$ by $k = \lfloor n/|Q| \rfloor$. The goal is to prove that there exists an allocation that provides a value of at least $1/2$ to at least a $\frac{k}{k+1}$ fraction of agents of $Q$. To prove such an allocation exists, we present a randomized process. To this end, we define the concept of guiding graph in the following way:

\begin{definition}
	For a subset $Q$ of the agents and $k = \lfloor n/|Q| \rfloor$, the guiding graph is a bipartite graph with two parts random-seed nodes and allocation nodes that meets the following properties:
	\begin{itemize}
		\item The degree of each allocation node is exactly $k+1$. Also, each allocation node corresponds to one agent of $Q$ (the correspondence is many to one).
		\item Every random-seed node is incident to exactly $|Q|$ allocation nodes each corresponding to one agent of $|Q|$. More precisely, for every random-seed vertex $v$ and every agent $\agent_{x_i}$ in $Q$, $v$ has an edge to an allocation node that corresponds to agent $\agent_{x_i}$.
	\end{itemize}
\end{definition}

For reasons that we explained in Section~\ref{sec:o2}, our proof is based on a guiding graph that has a large girth. Thus, we first prove that for any integer $g$, there exists a guiding graph that has a girth of at least $g$. We remind the reader that such a guiding graph would have exponentially many vertices in both parts but the size of this graph plays no role in our analysis.
\begin{lemmarep}\label{lem:girth}
	For any integer $g$, there exists a guiding graph that has a girth of at least $g$.
\end{lemmarep}
\begin{appendixproof}
	We first point out that  a complete bipartite graph $\mathcal{G}$ with $k+1$ random-seed nodes and $|Q|$ allocation nodes each corresponding to one agent of $Q$ meets the conditions of our guiding graph.
	In what follows, we show that from any guiding graph $\mathcal{G}$, one can construct another guiding graph $\mathcal{G}'$ that has a larger girth. This proves that for any girth $g$, there exists a guiding graph whose girth is at least $g$.
	
	To this end, arbitrarily label the edges of $\mathcal{G}$ with numbers $2^0, 2^1, \ldots, 2^{r-1}$ where $r$ is the number of edges in $\mathcal{G}$. Also, let $h = 2^r$. We construct $\mathcal{G}'$ in the following way:
	\begin{itemize}
		\item For every vertex $v$ of $\mathcal{G}$, there are $h$ copies of $v$ in $\mathcal{G}'$ denoted by $v^0, v^1, v^2, \ldots, v^{h-1}$. Copies of the random-seed nodes of $\mathcal{G}$ are random-seed nodes of $\mathcal{G}'$ and copies of the allocation nodes of $\mathcal{G}$ are also allocation nodes of $\mathcal{G}'$. Moreover, for each allocation node of $\mathcal{G}$ which corresponds to an agent $\agent_{x_i} \in Q$, all copies of that node also correspond to agent $\agent_{x_i}$. 
		\item For each edge $(u,v)$ between an allocation node $u$ of $\mathcal{G}$ and a random-seed node $v$ of $\mathcal{G}$ whose label is $x$, we put an edge between $v^i$ and $u^{(i+x) \textsf{ mod } h}$ for all $0 \leq i \leq h$.
	\end{itemize}
	It follows from the construction of $\mathcal{G}'$ that it is a guiding graph. More precisely, each of its random-seed nodes are connected to $|Q|$ allocation nodes that have a 1-to-1 correspondence to the agents of $Q$ and also each allocation node of $\mathcal{G}'$ is connected to exactly $k+1$ random-seed vertices of $\mathcal{G'}$. We prove in the following that the girth of $\mathcal{G}'$ is greater than the girth of $\mathcal{G}$. To this end, let $v_{1}^{s_1}, u_{1}^{t_1}, v_{2}^{s_2}, u_{2}^{t_2}, \ldots, v_{c}^{s_c}, u_{c}^{t_c}$ be the shortest cycle of $\mathcal{G}'$ (odd nodes are random-seed nodes and even nodes are allocation nodes). This implies that $v_1, u_1, v_2, u_2,\ldots, v_c, u_c$ is also a closed walk of $\mathcal{G}$ with no edge traversed back and forth consecutively. We prove that this closed walk cannot be a cycle of $\mathcal{G}$ and therefore $\mathcal{G}$ has a smaller cycle which completes the proof. Thus, assume for the sake of contradiction that $v_1, u_1, v_2, u_2, \ldots, v_c, u_c$ is also a cycle in $\mathcal{G}$. This means that all the edge labels in this cycle are unique powers of $2$. Thus, if we take a subset of these edges and sum up the values of their labels and subtract the values of the rest of the labels of this cycle from the sum, the final value would be a non-zero number in range $[-h+1, h-1]$. Notice that there is a 1-to-1 correspondence between the edge weights of the corresponding cycle of $\mathcal{G}$ and numbers in series $S = (t_1 - s_1 + h) \text{ mod } h, (t_1 - s_2 + h) \text{ mod } h, (t_2 - s_2 + h) \text{ mod } h, (t_2 - s_3 + h) \text{ mod } h, \ldots, (t_c - s_c + h) \text{ mod } h, (t_c - s_1 + h) \text{ mod } h$. This implies that $(t_1-s_1 + h) \text{ mod } h + (s_2-t_1 + h) \text{ mod } h + (t_2 - s_2 + h) \text{ mod } h + \ldots + (s_1-t_c + h)\text{ mod } h$ which is divisible by $h$ can be made by multiplying the edge labels of the corresponding cycle in $\mathcal{G}$ by either $1$ or $-1$ and summing up their values. This is a contradiction. 
\end{appendixproof}

Now, we fix a small $\epsilon > 0$ and denote by $\mathcal{G}$ a guiding graph whose girth is at least $g = (10 (k+10)(|M|+10)(|Q|+10)+10)/\epsilon$.  Our proof is based on a randomized process on the guiding graph. In this randomized process, we label the allocation nodes and the edges of $\mathcal{G}$ with bundles of items in the following way:

\begin{enumerate}
	\item First, for each allocation node $v$ corresponding to agent $\agent_{x_i} \in Q$, we label $v$ with one of the $n$ bundles of $\agent_{x_i}$ whose value to her is at least $1$. After this labelling, the probability that each item appears in the label of a fixed allocation node is $1/n$ and thus the probability that an item appears in the label of any of the $|Q|$ incident vertices of a random-seed node is at most $|Q|/n \leq 1/k$.
	\item After the labels of the allocation nodes are realized, for each allocation node, we label each of its edges by a subset of the items in its label. We maintain the following two properties in our labelling: (1) The labels of the edges of each random-seed vertex will be disjoint. (2) for each allocation node, with probability at least $(1-\epsilon)$ none of the items of its label is missing in the label of more than one of its edges, i.e., each item in the label of an allocation node is present in the labels of at least $k$ (out of $k+1$) of its edges with probability at least $1-\epsilon$.
\end{enumerate}

While the vertex labelling in (i) is simple and uniformly at random, in order to meet the two properties stated in (ii) we utilize a rather complex method that we explain in the following. At this point, each allocation node of $\mathcal{G}$ has a label according to the procedure stated in (i). To explain our method for edge labelling of (ii), fix an item $\ite_x$ and define $\mathcal{H}_x$ as an induced subgraph of $\mathcal{G}$ that contains all the random-seed nodes but only the allocation nodes whose labels contain $\ite_x$. We first prove that for each allocation node $v$, with  probability at least $1-\frac{\epsilon}{|M|}$ either $v$ is not present in $\mathcal{H}_x$ or the connected component of $\mathcal{H}_x$ that includes $v$ is a tree.

\begin{lemmarep}\label{lemma:prob}
	Fix an item $\ite_x \in \items$ and an allocation node $v$ of $\mathcal{G}$ before the labelling process. After labelling the allocation vertices of $G$ and constructing $\mathcal{H}_x$ based on the labels of $\mathcal{G}$, with probability at least $1-\frac{\epsilon}{|M|}$, either $v$ is not present in $\mathcal{H}_x$ or the connected component of $v$ in $\mathcal{H}_x$ will be a tree.
\end{lemmarep}
\begin{appendixproof}
	We actually prove a stronger statement here. We show that if $v$ is present in $\mathcal{H}_x$, then with probability at least $1-\epsilon/|M|$, the connected component of $\mathcal{H}_x$ that contains $v$ is a tree. Thus, we assume for simplicity here that $v$ is indeed present in $\mathcal{H}_x$. We consider a fixed random-seed node $u$ of $\mathcal{H}_x$ and analyze the number of nodes we see if we run a BFS on this vertex. We start with one node and iterate over the edges of this node. The end point of each edge is an allocation node which is connected to $k+1$ random-seed nodes (including this one). Thus, each time we encounter an edge, we add one more allocation node and $k$ more random-seed nodes to our BFS queue (assuming we have not visited any of them). However, keep in mind that for the newly added random-seed nodes, we have already visited one of their neighbors (and thus the expected number of their unvisited neighbors will be multiplied by $\frac{|Q|-1}{|Q|}$). Thus, if we formulate the expected number of nodes (excluding $u$ itself) we visit in such a procedure by $f$, we obtain $$f \leq (1 + k + kf\frac{|Q|-1}{|Q|}) [\textsf{the expected degree of a random-seed node in }\mathcal{H}_x].$$ 
	On the other hand, due to the labelling of the allocation nodes, the expected number of times an item appears in the labels of any of the allocation nodes incident to a random-seed node is bounded by $|Q|/n \leq 1/k$. Thus, we obtain $f \leq 1 + (1 + k +  kf\frac{|Q|-1}{|Q|})/k$ which implies $f \leq 1 + 1/k + 1 +  f\frac{|Q|-1}{|Q|}$ and thus $f \leq (2 + 1/k)|Q|$. Therefore $1+f  \leq 1 + (2 + 1/k)|Q|$ is a bound on the expected size of the connected component of a random-seed node in $\mathcal{H}_x$. Since each allocation node is connected to $k+1$ random-seed nodes, this implies that the expected size of the connected component of $\mathcal{H}_x$ that includes $v$ is bounded by $$1 + (1 + (2 + 1/k)|Q|) (k+1) = 2 + k + 2(k+1)|Q| + \frac{|Q|(k+1)}{k} \leq 10 (|Q|+10) (k+10) .$$ Given that the girth of $\mathcal{G}$ as well as the girth of $\mathcal{H}_x$ is at least  $g = \frac{10 (k+10)(|M|+10)(|Q|+10)+10}{\epsilon}$ the probability that the size of the connected component of node $v$ in $\mathcal{H}_x$ is at least $g$ is at most $\epsilon/|M|$ and thus with probability at least $1-\epsilon/|M|$ the connected component of $v$ in $\mathcal{H}_x$ is a tree. 
\end{appendixproof}

Next we present an algorithm that determines the labels of which edges of $\mathcal{G}$ contain item $\ite_x$. We show that with this algorithm, for each random-seed vertex of $\mathcal{G}$ the label of at most one of its incident edges contains $\ite_x$. Moreover, we prove that for any allocation node $v$ of $\mathcal{G}$ with probability at least $1-\epsilon/|M|$ either $\ite_x$ is not present in the label of $v$ or item $\ite_x$ appears in the labels of $k$ (out of $k+1$) of its incident edges.

Let $Y$ be a subset of the allocation nodes of $\mathcal{H}_x$ whose connected components are trees. The number of edges incident to the vertices of $Y$ is exactly $(k+1)|Y|$. Since none of these edges appears in any cycle, there are at least $(k+1)|Y|+1$ distinct endpoints for these edges. $|Y|$ of these endpoints are the vertices of $Y$ and thus at least $k|Y|+1$ of these endpoints are random-seed nodes (Recall that $\mathcal{G}$ is bipartite). This basically means that the number of neighbors of $Y$ in the random-seed part is at least $k$ times the size of $|Y|$. By the Hall Theorem~\cite{hall1987representatives}, this means that there exist $k$ disjoint matchings of $\mathcal{H}_x$ each of which covers the allocation nodes whose connected components are trees. On the other hand, each random-seed vertex of $\mathcal{H}_x$ is covered in at most one of these matchings. Since the connected component of each allocation node of $\mathcal{H}_x$ is a tree with probability at least $1-\epsilon/|M|$, then every node of $\mathcal{H}_x$ is covered by at least $k$ matchings with probability at least $1-\epsilon/|M|$ and therefore for every allocation node $v$ of $\mathcal{G}$, with probability at least $1-\epsilon/|M|$, either $\ite_x$ does not appear in the label of node $v$, or it appears in the labels of $k$ of its incident edges.

We use the same algorithm to determine whether each item appears in the label of each edge of $\mathcal{G}$. For an allocation node $v$, the probability that each item of its label does not appear in the label of $k$ of its incident edges is bounded by $\epsilon/|M|$ and thus with probability at least $1-\epsilon$, each of the items in its label is present in the labels of at least $k$ of its incident edges. We  now prove Lemma~\ref{lemma:fourth}.

\begin{proof}[ of Lemma~\ref{lemma:fourth}]
	We explained the construction and labelling of $\mathcal{G}$ earlier. Here we prove that if we randomly select a random-seed node of $\mathcal{G}$ and construct the allocation based on the labels of the edges incident to that vertex, each agent $\agent_{x_i}$ of $Q$ has a $\frac{(1-\epsilon)k}{k+1}$ chance that she receives a bundle whose value to her is at least $1/2$. Let us be more precise about our allocation method. We randomly choose a random-seed vertex $v$ of $\mathcal{G}$. This vertex is incident to $|Q|$ allocation nodes of $\mathcal{G}$ that have a 1-to-1 correspondence with the agents of $Q$. Fix an agent $\agent_{x_i} \in Q$ and look at the edge incident to $v$ whose other endpoint is an allocation node corresponding to $\agent_{x_i}$. The bundle allocated to agent $\agent_{x_i}$ in our allocation is the label of that edge. Due to properties we explained earlier, our allocation is valid in the sense that no item is allocated to more than one agent. We show in the following that in expectation, at least a $(1-\epsilon) \frac{k}{k+1}$ fraction of the agents of $Q$ will receive a bundle whose value to them is at least $1/2$. Due to the fact that $\epsilon$ can be made arbitrarily small, this proves that there exists an allocation to agents of $Q$ in which at least a $\frac{k}{k+1}$ fraction of the agents of $Q$ receive a bundle whose value to them is at least $1/2$.
	
	Our proof is based on a double-counting argument. For each allocation node $u$ of $\mathcal{G}$  corresponding to agent $\agent_{x_i} \in Q$, denote by $L_u$ the label of node $u$. Based on our algorithm we have $\valu_{x_i}(L_u) \geq 1$. Color each edge $e$ incident to $u$ in red, if $\valu_{x_i}(L_e) \geq 1/2$ where $L_e$ is the label of edge $e$. We aim to prove that at least a $(1-\epsilon)\frac{k}{k+1}$ fraction of the edges of $\mathcal{G}$ are colored in red in expectation. To this end, fix an allocation vertex $u$ whose label is $L_u$ and denote by $e_1, e_2, \ldots, e_{k+1}$ its incident edges. Also, denote by $L_{e_j}$ the label of edge $e_j$. Define $W_1, W_2, \ldots, W_{k+1}$ as $k+1$ subsets of $L_e$, where $W_j = L_u \setminus L_{e_j}$. Our labelling method guarantees that with probability at least $1-\epsilon$  all $W_j$'s are disjoint. Let us assume that this is the case and assume for the sake of contradiction that $\valu_{x_i}(L_{e_j}) < 1/2$ holds for two distinct $j$'s. Let us without loss of generality assume that this is the case for $j=1$ and $j=2$ and therefore both $\valu_{x_i}(L_{e_1}) < 1/2$ and $\valu_{x_i}(L_{e_2}) < 1/2$ hold. Due to subadditivity of $\valu_{x_i}$, we have $\valu_{x_i}(W_1) >  1/2$ and $\valu_{x_i}(W_2) > 1/2$ and this contradicts our assumption given that $W_1 \subseteq L_{e_2}$ and $W_2 \subseteq L_{e_1}$. Thus, with probability at least $1-\epsilon$, at least $k$ out of $k+1$ edges incident to $u$ will be colored in red. This implies that in expectation, at least a $(1-\epsilon)\frac{k}{k+1}$ fraction of the edges of $\mathcal{G}$ are colored in red and therefore the probability that an agent $\agent_{x_i} \in Q$ receives a bundle in our algorithm whose value to her is at least $1/2$ is at least $(1-\epsilon)\frac{k}{k+1}$. This completes the proof.
\end{proof}

We show in the rest of this section that Lemma~\ref{lemma:fourth} yields the existence of an $1/O((\log \log )^2 n)$-\MMS\ guarantee for subadditive agents. This is very similar to what we explained earlier in Section~\ref{sec:warmup}. We start by setting $N_1 = N$ and use Lemma~\ref{lemma:prob} to construct an allocation of items to a subset of $N_1$ in which every agent receives a bundle whose value to her is at least $1/2$. Due to guarantees of Lemma~\ref{lemma:prob}, the size of this subset is at least $|N_1|/2$. We then construct $N_2$ by excluding those agents from $N_1$. Now, notice that $n/|N_2| \geq 2$. Thus, by Lemma~\ref{lemma:prob}, we can find an allocation of items to a subset of agents of $N_2$ in which each agent receives a bundle whose worth to her is at least $1/2$ and that the number of agents receiving  those bundles is at least $2/3 |N_2|$. We then continue this process by setting $N_3$ as $N_2$ minus those agents. This time we have $n/|N_3| \geq 6$ and thus Lemma~\ref{lemma:fourth} would provide a better allocation for us. Define $r_i$ as $n/|N_i|$ in this process. Thus we have $r_1 = 1$, $r_2 \geq  2$, $r_3 \geq 6$, and in general $r_{i+1} \geq r_i(r_i+1)$. It follows from this formula that $r_i \geq 2^{2^{i-2}}$ holds for $i \geq 2$ and thus our algorithm terminates after $\log \log n + 2$ steps. Since the items allocated in each step are disjoint and every agent receives a bundle which is worth at least $1/2$ to her in one of these steps, it follows that there exists a $(\log \log n + 2)$-multiallocation which is $1/2$-\MMS.

\begin{theoremrep}\label{theorem:main}
	The maximin share problem with subadditive agents admits a $\frac{1}{432000 (\log \log n)^2} \text{-}\MMS$ guarantee. 
\end{theoremrep}
\begin{appendixproof}
	For \( n = 2 \), a simple cut-and-choose algorithm guarantees \( 1 \)-\MMS\ to both agents. for \( n \geq 3 \), we compute the approximation guarantee of the algorithm.The algorithm terminates after at most \( \log \log n + 2 \) steps, which we simplify to at most \( 4 \log \log n \) steps for \( n \geq 3 \). This produces a \( 4 \log \log n \)-multiallocation where each agent values their bundle at least \( \frac{1}{2} \).
	Setting \( \alpha = 4 \log \log n \) and \( \eta = 2 \) in \Cref{approxl}, the approximation ratio is:
	\[
	\frac{1}{10800 \cdot 4 \log \log n \cdot 2 (\log(4 \log \log n) + \log \log n)} \text{-} \MMS.
	\]
	Now let us simplify the logarithmic term:
	$
	\log(4 \log \log n) + \log \log n \leq 4 \log \log n + \log \log n = 5 \log \log n.
	$
	Thus, the denominator becomes:
	$
	10800 \cdot 4 \cdot 2 \cdot 5 (\log \log n)^2 = 432000 (\log \log n)^2.
	$ Therefore, the algorithm guarantees a $$ \frac{1}{432000 (\log \log n)^2} \text{-}\MMS$$ allocation.
	
\end{appendixproof}

\printbibliography

\end{document}